\theoremstyle{definition}
\newtheorem{theorem}{Theorem}
\newtheorem*{theorem*}{Theorem}
\newtheorem{definition}[theorem]{Definition}
\newtheorem{lemma}[theorem]{Lemma}
\newtheorem*{note*}{Note}
\newtheorem{corollary}{Corollary}
\newtheorem{assumption}{Assumption}
\newcommand{\argmin}{\mathop{\arg\min}}
\newcommand{\bs}{\boldsymbol}
\newcommand{\diff}{\mathrm{d}}
\newcommand{\dist}{\mathop{\rm dist}}
\newcommand{\innerprod}[2]{\langle #1,#2 \rangle}
\newcommand{\best}[1]{\textcolor{blue}{\textbf{#1}}}
\newcommand{\second}[1]{\underline{#1}}
\title{Outlier-robust neural network training: \\ 
variation regularization meets trimmed loss \\ to prevent functional breakdown}
\author[1,2,3]{Akifumi Okuno\thanks{okuno@ism.ac.jp}}
\author[1,4]{Shotaro Yagishita}
\affil[1]{Institute of Statistical Mathematics}
\affil[2]{The Graduate University for Advanced Studies, SOKENDAI}
\affil[3]{RIKEN}
\affil[4]{Joint Support-Center for Data Science Research}
\date{\empty}
\begin{document}

\maketitle

\begin{abstract}
In this study, we tackle the challenge of outlier-robust predictive modeling using highly expressive neural networks. Our approach integrates two key components: (1) a transformed trimmed loss (TTL), a computationally efficient variant of the classical trimmed loss, and (2) higher-order variation regularization (HOVR), which imposes smoothness constraints on the prediction function.
While traditional robust statistics typically assume low-complexity models such as linear and kernel models, applying TTL alone to modern neural networks may fail to ensure robustness, as their high expressive power allows them to fit both inliers and outliers, even when a robust loss is used. To address this, we revisit the traditional notion of breakdown point and adapt it to the nonlinear function setting, introducing a regularization scheme via HOVR that controls the model's capacity and suppresses overfitting to outliers. 
We theoretically establish that our training procedure retains a high functional breakdown point, thereby ensuring robustness to outlier contamination. 
We develop a stochastic optimization algorithm tailored to this framework and provide a theoretical guarantee of its convergence. 
\end{abstract}

\section{Introduction}

Highly non-linear parametric models, such as deep neural networks, have gained significant attention due to their impressive expressive power~\citep{Goodfellow-et-al-2016,sze2017efficient,miikkulainen2019evolving,samek2021explaining}. These models can approximate arbitrary continuous functions~\citep{cybenko1989approximation,yarotsky2017error}, making them highly adaptive to complex target functions. This expressive power enables them to capture intricate patterns, as demonstrated by large language models~(see, e.g., \citet{brown2020language}; \citet{chang2024survey}) in modeling human language structures. However, this flexibility also introduces the risk of overfitting, where the model fits not only the true underlying patterns but also the noise in the training data, particularly in the common scenario where the data size is limited.

In the presence of outliers, overfitting becomes an even more critical concern, as models may learn spurious patterns that do not reflect the underlying data distribution. A common strategy to mitigate this issue is to employ robust loss functions, such as the trimmed loss~\citep{rousseeuw1984least}, which reduce the influence of outliers by discarding a fixed number of the largest sample-wise losses during training. While such approaches are effective in many traditional settings, we observe that robust losses alone are often insufficient when applied to highly expressive models like neural networks, as demonstrated in Figure~\ref{fig:illustration}(\subref{subfig:NN+Huber}) and Figure~\ref{fig:illustration}(\subref{subfig:NN+Tukey}). This observation serves as a central motivation for the present study.

\begin{figure*}[!t]
\centering
\begin{minipage}{0.235\textwidth}
\centering
\includegraphics[width=\textwidth]{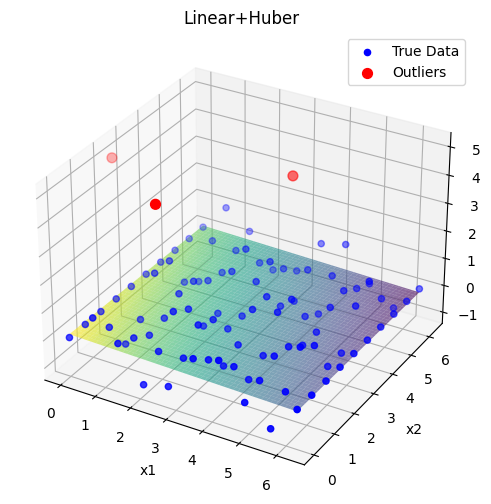}
\subcaption{Linear+Huber}
\label{subfig:Linear+Huber}
\end{minipage}
\begin{minipage}{0.235\textwidth}
\centering
\includegraphics[width=\textwidth]{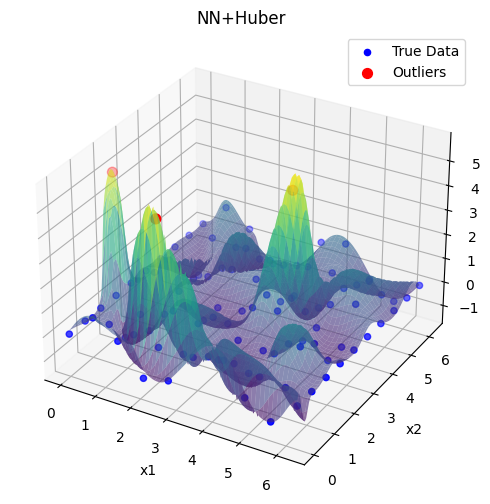}
\subcaption{NN+Huber}
\label{subfig:NN+Huber}
\end{minipage}
\begin{minipage}{0.235\textwidth}
\centering
\includegraphics[width=\textwidth]{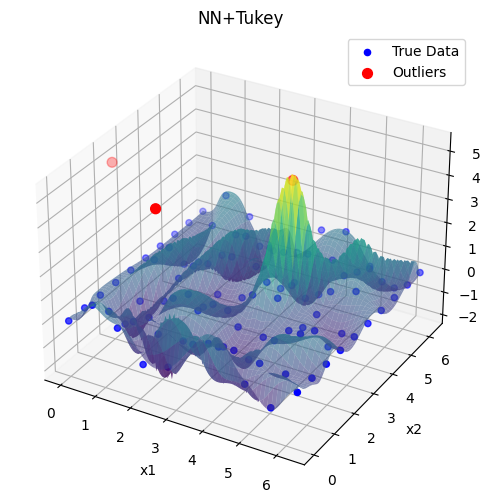}
\subcaption{NN+Tukey}
\label{subfig:NN+Tukey}
\end{minipage}
\begin{minipage}{0.235\textwidth}
\centering
\fbox{
\includegraphics[width=\textwidth]{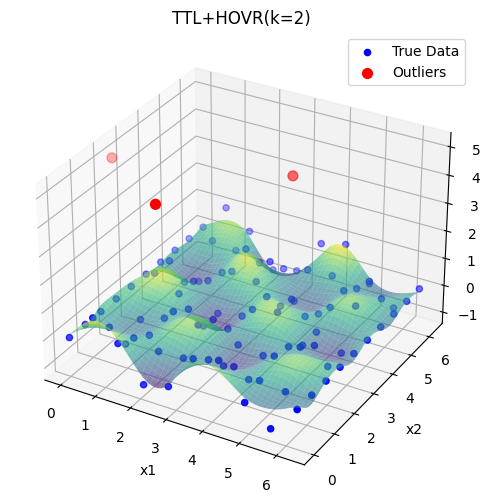}
}
\subcaption{\textbf{Ours}: NN+ARTL}
\label{subfig:Ours}
\end{minipage}
\caption{Illustration of our proposed approach, using the wave-like function $f_*(x)=\sin(2x_1)\cos(2x_2)$. This function is later referred to as the `checkered' function due to the checkered appearance of its contour plot. (\subref{subfig:Linear+Huber}): The linear model is robustly estimated, but it cannot capture complex patterns. (\subref{subfig:NN+Huber}) and (\subref{subfig:NN+Tukey}): Neural networks (NNs) without variation regularization may overfit to outliers, even when using robust loss functions such as Huber's or Tukey's, due to the excessive degrees of freedom in the network. (\subref{subfig:Ours}): Our proposed approach effectively ignores the outliers while preserving the expressive power. See Section~\ref{subsec:synthetic_dataset_experiments} for more details of the synthetic dataset experiments. Note that the simple weight decay, often used to treat singularity in the loss, usually does not address this issue.}
\label{fig:illustration}
\end{figure*}

To address this issue, we begin by revisiting the traditional concept of the \emph{breakdown point} in robust statistics (see, e.g., \citet{hampel1971general} and \citet{donoho1983notion}), which measures the smallest proportion of arbitrarily large outliers that can cause the estimated parameters to diverge.  
Extensions by \citet{alfons2013sparse} suggest that robust loss functions, particularly trimmed losses, combined with parameter regularization can achieve favorable breakdown behavior even in high-dimensional settings.  
Although the above results are restricted to linear models, \citet{alfons2013sparse} can be easily extended to even non-linear settings as also generalized in this study. 
At first glance, this appears to contradict the empirical results shown in Figure~\ref{fig:illustration}; a nonlinear neural network trained with the trimmed loss still suffers severely from the influence of outliers. 
However, the apparent inconsistency stems from a fundamental limitation of traditional robust statistics: its focus is narrowly confined to the stability of parameter estimates.

In models like linear regression or kernel methods, the relationship between parameters and model output is effectively linear, so the breakdown of parameters closely reflects the breakdown of predictions. In contrast, for highly flexible models such as neural networks, this link weakens. In light of these concerns, \citet{stromberg1992breakdown} proposed a variant of the breakdown point based on the model's output values rather than its parameters, specifically targeting nonlinear models. This concept was further generalized by \citet{sakata1995alternative} to encompass a wider class of evaluation metrics, and is herein referred to as the \emph{functional breakdown point}.

Building on the ideas of \citet{stromberg1992breakdown} and \citet{sakata1995alternative}, where they focused on revising the breakdown point definition but did not address estimation procedures, this study proposes a neural network training framework aimed at achieving a high functional breakdown point. Specifically, extending the notion of robustness beyond function values as considered in \citet{stromberg1992breakdown}, we first introduce higher-order variation (HOV), which builds on the concept of total variation~\citep{rudin1992nonlinear, koenker2004penalized, bredies2020higher}. Particularly, we leverage HOV as a regularization to constrain the degrees of freedom of neural networks, thereby controlling their expressiveness. By combining robust trimmed loss with HOV regularization (HOVR), we harness the advantages of both robustness to outliers and adaptability to nonlinear patterns. Furthermore, we demonstrate that this combination achieves a high functional breakdown point defined with HOV, providing a theoretical guarantee of robustness in the presence of outliers.

A final challenge lies in the computational difficulty of optimizing the trimmed loss equipped with HOVR; the trimmed loss requires sorting sample-wise losses, while HOVR involves integrating variations of the neural network, making direct optimization intractable.  
To address this, we propose a stochastic gradient–supergradient descent (SGSD) algorithm to minimize the resulting loss function.  
Building on the theoretical foundations of \citet{robbins1951stochastic}, \citet{ghadimi2013stochastic}, and \citet{bottou2018optimization}, we provide a convergence guarantee for the proposed algorithm. 
As a result, the training procedure becomes both robust to outliers and computationally efficient for neural networks.  
An illustration of our approach is provided in Figure~\ref{fig:illustration}(\subref{subfig:Ours}).

\subsection{Related works}

This section discusses related works on the key concepts of this paper: HOVR and TTL.

\subsubsection*{Works related to HOVR} 
Various heuristic approaches have been proposed to prevent overfitting in neural network training, such as dropout \citep{srivastava2014dropout}, early stopping~(e.g., \citet{yao2007early}), data augmentation~(e.g., \citet{hernandez2018data, chiyuan2021understanding}), and batch normalization \citep{luo2018towards}. These methods are considered forms of regularization and can be combined with parameter regularization, also called weight decay~\citep{kingma2014adam, Goodfellow-et-al-2016}. 

Total variation~(TV) regularization~\citep{rudin1992nonlinear, engl1996regularization, osher2005iterative} has been applied to regression models using splines~\citep{stone1994polynomial, mammen1997locally}, kernels~\citep{zou2005regularization}, and triograms~\citep{koenker2004penalized}, and TV can be regarded as a special case of HOVR with $k=q=1$. 
TV has been extended to second order~\citep{koenker2004penalized, hinterberger2006variational, duan2015second} and general order~($k \in \mathbb{N}$)~\citep{bredies2020higher}, with optimization via discrete approximations. In neural networks, neural splines~\citep{williams2021neural} approximate second-order TV terms through finite approximations. The representer theorem is also studied in Banach space~\citep{parhi2021banach} with TV terms measured in the Radon domain. These theories have been further developed~\citep{parhi2022what, unser2023ridge, bartolucci2023understanding}, but practical optimization algorithms remain underdeveloped.

Neural network training with the proposed HOVR can also be regarded as a special case of the physics-informed training of neural network~\citep[PINN;][]{dissanayake1994neural,berg2018unified,raissi2019pinn,cuomo2022scientific}. 
Our approach can minimize the integral-type regularization more efficiently, without computing the explicit numerical integration, while the conventional PINN implementation finitely approximates the regularization terms over a fixed grid. Also, most PINN leverages non-robust loss functions such as mean squared error.

\subsubsection*{Works related to TTL} 

Robust loss functions have largely been developed following the seminal work of \citet{beaton1974fitting}, \citet{huber1981robust} and \citet{rousseeuw1984least}. Building on this foundation, \citet{yagishita2024fast} introduces an innovative approach that transforms sparse least trimmed squares (SLTS) into a more tractable problem, making it solvable via a deterministic proximal gradient method. This method exhibits strong convergence properties for linear models, outperforming FAST-LTS~\citep{alfons2013sparse} and TSVRG~\citep{aravkin2020trimmed} in speed and offering potentially strong convergence. While these studies are grounded in proximal methods, our study adopts a stochastic approach, while preserving the idea of transforming trimmed loss into a more manageable form.

While convergence for nonlinear models has been addressed, proximal methods can be less efficient, particularly in our case where computing the full gradient of the integral-based HOVR term is intractable. To overcome this, we propose a stochastic approach. The convergence of stochastic proximal gradient-type methods for non-smooth, non-convex problems has been explored by \citet{xu2019stochastic, metel2019simple, xu2019non, metel2021stochastic, yun2021adaptive, deleu2021structured}. However, \citet{xu2019stochastic} and \citet{metel2019simple, metel2021stochastic} assume Lipschitz continuity for the non-convex regularization term, which is not applicable in our context. Additionally, \citet{xu2019non}, \citet{yun2021adaptive}, and \citet{deleu2021structured} require increasing mini-batch size for convergence, which is impractical for real-world use. Note that TSVRG \citep{aravkin2020trimmed} is not suited to our problem setting.

\section{Robust NN Training to Prevent Breakdown}
\label{sec:HOVR}

We describe a multi-layer perceptron, referred to here as a neural network, in Section~\ref{subsec:NN}.  
The robust trimmed loss is introduced in Section~\ref{subsec:trimmed_loss}.  
In Section~\ref{subsec:PBP}, we define the (parametric) breakdown point, a traditional robustness metric, and subsequently propose a novel functional breakdown point in Section~\ref{subsec:FBP}.  
To achieve a high functional breakdown point, we introduce higher-order variation regularization in Section~\ref{subsec:HOVR}.

\subsection{Neural network}
\label{subsec:NN}

Let $J,n \in \mathbb{N}$ be the dimension of the covariate and the sample size, respectively. For each $j=1,2,\ldots,J$, let $D_{j}^{-},D_{j}^{+} \in \mathbb{N}$ define the support of the covariate, with $D_{i}^{-} < D_{i}^+$; we then define the compact covariate space as $\Omega := \prod_{j=1}^{J} [D_{j}^{-},D_{j}^{+}]$. We denote the observations as pairs $(x_i,y_i) \in \Omega \times \mathbb{R}$, for $i=1,2,\ldots,n$. 

We consider training the regression model $f_{\theta}:\Omega \to \mathbb{R}$ parameterized by $\theta \in \Theta \subset \mathbb{R}^r$ ($r \in \mathbb{N}$), such that $f_{\theta}(x_i)$ closely approximates $f_*(x_i)=\mathbb{E}[y_i | x_i]$. For theoretical analysis, we assume that $f_{\theta}$ is sufficiently smooth over $\Theta \times \Omega$, but in practice, it is typically sufficient for $f_{\theta}$ to be smooth almost everywhere. 
For simplicity, this study employs a multi-layer perceptron 
\begin{align*}
    (\text{MLP}): \quad 
    f_{\theta}(x) &= (l_{\theta}^{(Q)} \cdot \tilde{l}_{\theta}^{(Q-1)} \cdot \cdots \cdot \tilde{l}_{\theta}^{(0)})(x),
\end{align*}
where 
$\tilde{l}_{\theta}^{(q)}=\sigma \circ l_{\theta}^{(q)}$, $l_{\theta}^{(q)}(x) = A^{(q)} x + b^{(q)}$ denotes a linear function, and $\sigma:\mathbb{R}\to\mathbb{R}$ represents an activation function (with $\sigma(z)$ applied element-wise if $z$ is a vector). Typical choice for $\sigma$ includes the sigmoid function $\sigma(z)=1/\{1+\exp(-z)\}$ or the hyperbolic tangent function $\sigma(z)=\{\exp(z)-\exp(-z)\}/\{\exp(z)+\exp(-z)\}$. MLP is known as a specific architecture of the neural network having the universal approximation capability~\citep{cybenko1989approximation,yarotsky2017error}, and MLP with large $Q \gg 0$ is also known as a deep neural network~\citep{Goodfellow-et-al-2016}. 

The parameters to be estimated in this MLP are $\theta=(\{A^{(q)},b^{(q)}\}_{q=0}^{Q})$, where $A^{(q)} \in \mathbb{R}^{L_{q+1} \times L_q}$ and $b^{(q)} \in \mathbb{R}^{L_{q+1}}$ are defined with the user-specified number of hidden units $L_{1},L_{2},\ldots,L_{Q} \in \mathbb{N}$. The input and output dimensions are specified as $L_{0}=J$ and $L_{Q+1}=1$. 

While we focus on the MLP in this study, it is worth noting that the follwing discussion can be applied to arbitrary regression models, including traditional splines, kernel regression models, and general architecture of neural networks.

\subsection{Trimmed loss function}
\label{subsec:trimmed_loss}

To train the neural network, this section explains the trimmed loss. 
Let the residual be $r_i(\theta)=y_i-f_{\theta}(x_i)$ and denote the sorted indices by $(1;\theta),(2;\theta),\ldots,(n;\theta)$, where they satisfy $|r_{(1;\theta)}(\theta)| \le |r_{(2;\theta)}(\theta)| \le \cdots \le |r_{(n;\theta)}(\theta)|$. Given a user-specified hyperparameter $h \in \mathbb{N}$ and the residual vector $r(\theta)=(r_1(\theta),r_2(\theta),\ldots,r_n(\theta))$, the $h$-trimmed loss~\citep{rousseeuw1984least} is defined as:
\[
    T_h(r(\theta))
    =
    \frac{1}{n}
    \sum_{i=1}^{h} r_{(i;\theta)}^2(\theta),
\]
and it aims to mitigate the adverse effect of outliers by discarding the $(n-h)$-largest sample-wise losses. 
The $h$-trimmed loss coincides with the standard mean squared error $n^{-1}\sum_{i=1}^{n}\{y_i-f_{\theta}(x_i)\}^2$ by specifying $h=n$. 
Typically, the hyperparameter $h$ is chosen to satisfy $h < n$, allowing the method to discard a fraction of the samples.  
For instance, in our experiments, we set $h = 0.9n$.

\subsection{Parameter breakdown point}
\label{subsec:PBP}

To motivate the functional breakdown point introduced in Section~\ref{subsec:HOVR}, we first review the traditional notion of the (parameter) breakdown point developed in robust statistics, which measures the stability of parameter estimates against contamination by outliers.
Given an inlier dataset $Z = \{(x_i, y_i)\}_{i=1}^{n}$ and a user-specified parameter norm $\|\cdot\|$, we define the traditional parameter-based breakdown point as follows:

\begin{definition}
Let $\sup_{\widetilde{Z}^{(m)}}$ denote the supremum taken over all datasets $\widetilde{Z}^{(m)}$ obtained by replacing $m$ original samples in $Z$ with arbitrary values. Additionally, let $\hat{\theta}(\widetilde{Z}^{(m)})$ denote the estimator trained on $\widetilde{Z}^{(m)}$. Then, the (parameter) breakdown point~\citep{hampel1971general,donoho1983notion} is defined as:
\begin{align*}
    \mathcal{E}(\hat{\theta}, Z)
    := \min\left\{
        \frac{m}{n} ~\middle|~ \sup_{\widetilde{Z}^{(m)}} \|\hat{\theta}(\widetilde{Z}^{(m)})\| = \infty
    \right\}.
\end{align*}
\end{definition}

This definition quantifies the proportion of contamination that the estimator \( \hat{\theta} \) can tolerate before its parameter norm diverges to infinity. A larger breakdown point implies greater robustness, meaning that the estimator remains stable under a higher fraction of outliers.

It is well known that the parameter breakdown point of the linear regression estimator obtained by minimizing the mean squared error,
$\hat{\theta} = \argmin_{\theta} \frac{1}{n} \sum_{i=1}^{n} \{y_i - \langle x_i, \theta \rangle\}^2$, is $1/n$, meaning that even a single outlier is sufficient to break down the parameter estimation.  
However, it has also been shown that the breakdown point can be improved to $(n-h+1)/n$ when minimizing the $h$-trimmed loss instead. $h$ is taken to be greater than roughly half of $n$.  

Thus, the outlier robustness of the trimmed loss can be naturally understood from the perspective of breakdown points.  
Further investigation into the breakdown behavior of parameter estimation with trimmed loss in high-dimensional settings is provided by \citet{alfons2013sparse}.

\subsection{Functional breakdown point}
\label{subsec:FBP}

While traditional robust statistics has mainly focused on linear models or kernel models, these models are classified as basis function models of the form $f_{\theta}^{\dagger}(x) = \sum_{l=1}^{L} \theta_l \phi_l(x)$, which also include spline models.  
In these models, there is an effectively linear relationship between the essential parameters and the output values, so the behavior of the parameters directly and significantly influences the output of the function.  
In contrast, in many nonlinear models especially neural networks, which are the focus of this study, the relationship between the parameters and the model outputs is highly nonlinear. 
Therefore, it is important to monitor the breakdown of the function itself rather than that of the parameters.
From this perspective, \citet{stromberg1992breakdown} proposed a revised definition of breakdown based on the prediction model \( f_{\theta}(x) \) rather than the parameters, which we herein refer to as the \emph{functional breakdown point}. 
While their definition of breakdown is invariant under reparameterization, it may not be appropriate for arbitrary families of nonlinear regression models. \citet{sakata1995alternative} pointed this out and argued that the concept of breakdown should be defined in a manner tailored to each model family.

Inspired by the spirit of \citet{sakata1995alternative}, we consider a variant of function variation and formally define the higher-order variation (HOV) as follows:
\begin{align}
    C_{k,q}(f_{\theta})
    =
    \sum_{\bs i}
    w_{\bs i}
    \int_{\Omega} 
    \left|
    \nabla^{[k]}_{\bs i}
    f_{\theta}(x)
    \right|^q \, \mathrm{d}x,
    \label{eq:HOV}
\end{align}
where \( k \in \{0,1,2,\ldots\} \) and \( q > 0 \) are user-specified parameters, \( w_{\bs i} \geq 0 \) are weights satisfying \( \sum_{\bs i} w_{\bs i} = 1 \), and \( \nabla^{[k]}_{\bs i} = \partial^k / \partial x_{i_1}\partial x_{i_2}\cdots\partial x_{i_k} \) denotes the mixed partial derivative corresponding to multi-index \( \bs i = (i_1, i_2, \ldots, i_k) \). This formulation of HOV measures the \( q \)th powered variation of the \( k \)th order derivatives of the function \( f_{\theta} \).

Given an inlier dataset \( Z = \{(x_i, y_i)\}_{i=1}^{n} \), we define the \emph{functional breakdown point} in terms of HOV as follows:

\begin{definition}
\label{def:FBP}
Let \( k \) and \( q \) be indices as defined in Section~\ref{sec:HOVR}. Let \( \sup_{\widetilde{Z}^{(m)}} \) denote the supremum taken over all datasets \( \widetilde{Z}^{(m)} \) obtained by replacing \( m \) samples in \( Z \) with arbitrary values. Additionally, let \( \hat{\theta}(\widetilde{Z}^{(m)}) \) denote the estimator trained on \( \widetilde{Z}^{(m)} \). Then, the functional breakdown point is defined as:
\begin{align*}
    \mathcal{E}_{k,q}^*(f_{\hat{\theta}}, Z) 
    := \min\left\{
        \frac{m}{n} ~\middle|~ \sup_{\widetilde{Z}^{(m)}} C_{k,q}(f_{\hat{\theta}(\widetilde{Z}^{(m)})}) = \infty
    \right\}.
\end{align*}
\end{definition}

Note that our functional breakdown point is also invariant under reparameterization.
Hereinafter, this study proposes a neural network training framework designed to achieve a high functional breakdown point. Theorem~\ref{theo:breakdown_point} establishes that the functional breakdown point of the proposed framework allows it to tolerate $n-h$ outliers in the dataset.

\subsection{HOVR}
\label{subsec:HOVR}


To attain a high functional breakdown point (see Definition~\ref{def:FBP}), this study proposes incorporating the HOVR to the trimmed loss: we propose training the neural network by minimizing: 
\begin{align}
    \frac{1}{2}T_h(r(\theta))
    +
    \lambda C_{k,q}(f_{\theta}).
    \label{eq:HOV-regularized-TTL}
\end{align}

As the simplest and most illustrative example, for basis function models of the form \( f_{\theta}^{\dagger}(x) = \sum_{l=1}^{L} \theta_l \phi_l(x) \), the parameter regularization \( \|\theta\|_2^2 \) corresponds to the HOVR \( C_{k,2}(f_{\theta}) \), as discussed in Supplement~\ref{app:regularization_kernel}.  
Therefore, in traditional models such as linear regression or kernel regression, parameter regularization and HOVR are essentially compatible.
However, when considering highly expressive models like neural networks, HOVR and parameter regularization become fundamentally different concepts.
While, to the best of the authors' knowledge, there is no existing practical regularization exactly identical to the HOVR, it encompasses many known cases: for instance, HOVR with \( (k,q) = (1,1) \) corresponds to total variation regularization~\citep{rudin1992nonlinear, engl1996regularization, osher2005iterative}, and extensions to second-order~\citep{koenker2004penalized, hinterberger2006variational} and general-order~\citep{bredies2020higher} variations have also been proposed.

Herein, we assume that the regularized trimmed loss function~\eqref{eq:HOV-regularized-TTL} admits a minimizer \( \hat{\theta} \) for any dataset. 
By virtue of the HOVR, we obtain the following theorem, extending the mathematical proof in \citet{alfons2013sparse} to our nonlinear and HOV-based functional breakdown point:

\begin{theorem}
\label{theo:breakdown_point}
Let $f_{\hat{\theta}}$ be the prediction model trained by minimzing the regularized trimmed loss \eqref{eq:HOV-regularized-TTL}. 
The following holds:
\[
\mathcal{E}_{k,q}^*(f_{\hat{\theta}}, Z) \ge \dfrac{n-h+1}{n}.
\]
\end{theorem}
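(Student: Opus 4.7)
The plan is to exploit the optimality of $\hat\theta(\widetilde Z^{(m)})$ on the regularized trimmed loss, combined with the intrinsic insensitivity of $T_h$ to the $n-h$ largest sample-wise losses. The key idea is that, whenever at least $h$ uncorrupted samples remain, any fixed reference parameter $\theta_0$ attains a regularized objective value on $\widetilde Z^{(m)}$ that is uniformly bounded, independent of the outlier values. By optimality of $\hat\theta$, this bound transfers directly to $\lambda C_{k,q}(f_{\hat\theta})$, keeping the HOV of the fitted model uniformly finite. This is essentially the nonlinear, HOV-based analogue of the breakdown argument used in \citet{alfons2013sparse}.

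First, I would fix any reference parameter $\theta_0 \in \Theta$ with $C_{k,q}(f_{\theta_0}) < \infty$; smoothness of $f_\theta$ in $(\theta,x)$ together with compactness of $\Omega$ ensures this for any $\theta_0$, and the choice realizing a constant prediction function is especially convenient since its partial derivatives of order $\ge 1$ vanish. I then define the finite constant
\[
B := \tfrac{1}{2n}\sum_{i=1}^{n} \{y_i - f_{\theta_0}(x_i)\}^2 + \lambda C_{k,q}(f_{\theta_0}),
\]
which depends only on $Z$, $\theta_0$, and $\lambda$, not on any contamination.

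Second, I would take any $m \le n - h$ and any $\widetilde Z^{(m)}$ obtained by replacing $m$ observations in $Z$, and let $I \subset \{1,\ldots,n\}$ index the unchanged samples, so $|I| \ge h$. Using the order-statistic identity $T_h(r) = \tfrac{1}{n}\min_{|S|=h}\sum_{i \in S} r_i^2$ and choosing any $h$-subset $S \subset I$, I obtain
\[
T_h(r(\theta_0);\widetilde Z^{(m)}) \le \tfrac{1}{n}\sum_{i \in I}\{y_i - f_{\theta_0}(x_i)\}^2 \le \tfrac{1}{n}\sum_{i=1}^{n}\{y_i - f_{\theta_0}(x_i)\}^2,
\]
so the regularized objective at $\theta_0$ on $\widetilde Z^{(m)}$ is at most $B$ regardless of the outliers. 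Optimality of $\hat\theta := \hat\theta(\widetilde Z^{(m)})$ then yields $\tfrac{1}{2}T_h(r(\hat\theta);\widetilde Z^{(m)}) + \lambda C_{k,q}(f_{\hat\theta}) \le B$, and dropping the nonnegative trimmed-loss term and dividing by $\lambda > 0$ gives $C_{k,q}(f_{\hat\theta(\widetilde Z^{(m)})}) \le B/\lambda$.

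Third, the bound $B/\lambda$ is uniform over all $\widetilde Z^{(m)}$ with $m \le n-h$, so $\sup_{\widetilde Z^{(m)}} C_{k,q}(f_{\hat\theta(\widetilde Z^{(m)})}) \le B/\lambda < \infty$; hence no fraction $m/n$ with $m \le n-h$ belongs to the defining set of $\mathcal{E}_{k,q}^*$ in Definition~\ref{def:FBP}, yielding $\mathcal{E}_{k,q}^*(f_{\hat\theta}, Z) \ge (n-h+1)/n$. The only real, and rather mild, obstacle is justifying the order-statistic bound with care: since $T_h$ equals the minimum over $h$-subset sums, restricting that minimum to $h$-subsets contained in the unchanged index set $I$ can only increase its value, which is exactly what yields a bound independent of the outlier positions and magnitudes.
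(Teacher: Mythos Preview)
Your proposal is correct and follows essentially the same approach as the paper's proof: fix a reference parameter $\theta_0$, use the order-statistic identity $T_h(r)=\tfrac{1}{n}\min_{|S|=h}\sum_{i\in S}r_i^2$ together with $|I|\ge h$ to bound the regularized objective on $\widetilde Z^{(m)}$ by a constant depending only on $Z$, then invoke optimality of $\hat\theta(\widetilde Z^{(m)})$ and drop the nonnegative trimmed-loss term to control $C_{k,q}$. Your constant $B$ is exactly the paper's $M_0$, and the logical flow is identical.
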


See Supplement~\ref{supp:proof_of_theo:breakdnwon_point} for the proof. Theorem~\ref{theo:breakdown_point} establishes that the NN training framework, which minimizes \eqref{eq:HOV-regularized-TTL}, can tolerate up to $n-h$ outliers while maintaining finiteness of the HOV.

However, a computational challenge remains in optimizing the regularized trimmed loss function~\eqref{eq:HOV-regularized-TTL}. The trimmed loss requires sorting sample-wise losses, and HOVR involves integrating variations of the nonlinear prediction model $f_{\theta}$. Therefore, the next section discusses the proposed stochastic optimization algorithm to address these difficulties.

\section{Stochastic Optimization Algorithm}

To efficiently minimize the HOV-regularized trimmed loss function~\eqref{eq:HOV-regularized-TTL}, this section introduces a stochastic optimization algorithm.  
Specifically, Section~\ref{subsec:SGSD} defines the augmented and regularized trimmed loss (ARTL) and proposes the stochastic gradient–supergradient descent (SGSD) algorithm. Section~\ref{subsec:stochastic_gradient} presents a practical construction of the stochastic gradient used within SGSD. Finally, Section~\ref{subsec:convergence_analysis} provides a theoretical guarantee for the convergence of the proposed algorithm.

\subsection{ARTL and SGSD}
\label{subsec:SGSD}

Since the trimmed loss involves a sorting operation in its computation, it is difficult to handle during optimization. To address this difficulty, \citet{yagishita2024fast} demonstrates that for arbitrary function $f_{\theta}$, including neural networks, minimizing the trimmed loss $T_h(r(\theta))$ is equivalent to minimizing the TTL:
\begin{align}
    \min_{\xi \in \mathbb{R}^{n}}
    \left\{ 
        \frac{1}{n}\|r(\theta)-\xi\|_2^2 + T_h(\xi)   
    \right\}
    =
    \frac{1}{2}T_h(r(\theta)).
    \label{eq:TTL_expression}
\end{align}

While it introduces an additional parameter $\xi$, it is more computationally tractable because the neural network parameter \( \theta \) to be estimated appears outside the sorting operation required by the trimmed loss. 
Considering the equiavlent expression of TTL \eqref{eq:TTL_expression}, minimizing the regularized trimmed loss \eqref{eq:HOV-regularized-TTL} is equivalent to minimizing the augmented and regularized trimmed loss (ARTL) 
\begin{align*}
        F_{h,\lambda}(\theta,\xi)
        =
        U_{\lambda}(\theta,\xi) - V_h(\xi)        
\end{align*}
with respect to the augmented parameter $(\theta,\xi)$, where
\begin{align*}
    U_{\lambda}(\theta,\xi)
    &=
    \frac{1}{n}\{\|r(\theta)-\xi\|_2^2 + \|\xi\|_2^2\}
    +
    \lambda C_{k,q}(f_{\theta}), \\
    V_h(\xi)
    &=
    \frac{1}{n}\|\xi\|_2^2 - T_h(\xi),
\end{align*}
are nonconvex smooth, and convex non-smooth functions, respectively. 
As the former $U_{\lambda}(\theta,\xi)$ is smooth, we can define its stochastic gradient $u_{\lambda}^{(t)}(\theta,\xi)$ that unbiasedly estimates the gradient $\partial U_{\lambda}(\theta,\xi)/\partial (\theta,\xi)$; see Section~\ref{subsec:stochastic_gradient} for our implementation. 
For the latter indifferentiable function $U_h(\xi)$, we define its subgradient $(0,v_h(\xi))$, where $v_h(\xi)$ is a member of the subdifferential set $\{v \mid \forall \xi',V_h(\xi') \ge V_h(\xi)+\innerprod{v}{\xi'-\xi}\}$. Namely, $-(0,v_h(\xi))$ is the supergradient of $-V_h$. Then, we define the stochastic gradient-supergradient: 
\[
    g_{h,\lambda}^{(t)}(\theta,\xi)
    =
    u_{\lambda}^{(t)}(\theta,\xi)
    -
    (0,v_h(\xi)). 
\]

With inspiration from the difference-of-convex algorithm \citep{tao1997convex}, we propose a stochastic gradient-supergradient descent (SGSD) algorithm that updates the parameters  $(\theta^{(t)},\xi^{(t)})$ at iteration $t$ by:
\[
    (\theta^{(t+1)},\xi^{(t+1)})
    \leftarrow 
    (\theta^{(t)},\xi^{(t)})
    -
    \omega_t 
    g_{\lambda}^{(t)}(\theta^{(t)},\xi^{(t)}).
\]
Since the proximal mapping of $T_h$ can be computed efficiently~\citep{yagishita2024fast}, one might consider applying stochastic proximal gradient methods. However, existing proximal methods such as those proposed by \citet{xu2019non} and \citet{yun2021adaptive} require the mini-batch size to grow indefinitely in order to guarantee convergence when dealing with integral-based regularization terms. This requirement makes them highly inefficient for our setting. In contrast, the proposed SGSD algorithm does not require increasing the batch size, as discussed in Section~\ref{subsec:convergence_analysis}.

\paragraph{SGSD in Practical Scenarios:} The stochastic gradient-supergradient $g_{h,\lambda}^{(t)}(\theta,\xi)$ reduces to the standard stochastic gradient of $F_{h,\lambda}(\theta,\xi)$ at differentiable points. Since differentiable points are present almost everywhere in the parameter space, the proposed SGSD, whose convergence is rigorously proven in Section~\ref{subsec:convergence_analysis}, is essentially equivalent to standard SGD when applied to ARTL in practical scenarios. Therefore, our analysis also can be regarded as a strong theoretical evidence for the SGD applied to robust regression using neural networks.

\subsection{A Practical Stochastic Gradient}
\label{subsec:stochastic_gradient}

To complete the definition of SGSD presented in Section~\ref{subsec:SGSD}, this section provides a practical construction of the stochastic gradient $u_{\lambda}^{(t)}(\theta^{(t)},\xi^{(t)})$, which serves as an unbiased estimate of the smooth function $U_{\lambda}(\theta^{(t)},\xi^{(t)})$. The stochastic gradient is employed instead of the deterministic gradient because the gradient of $U_{\lambda}(\theta,\xi)$ includes an integral term arising from the HOVR term~\eqref{eq:HOV}. 

We begin by specifying the hyperparamter $M^{(t)} \in \mathbb{N}$, and subsequently generate i.i.d. random numbers $\mathcal{Z}^{(t)}:= \{z_1^{(t)},z_2^{(t)},\ldots,z_{M^{(t)}}^{(t)}\}$ from a uniform distribution over the set $\Omega$. We then define:
\begin{align}
    &u_{\lambda}^{(t)}(\theta^{(t)},\xi^{(t)}) 
    =
    \frac{1}{n} \frac{\partial \{\|r(\theta^{(t)})-\xi^{(t)}\|_2^2 + \|\xi^{(t)}\|_2^2\}}{\partial (\theta,\xi)}
    +
    \lambda
    \sum_{\bs i}
    w_{\bs i}
    \frac{1}{M^{(t)}}
    \sum_{z \in \mathcal{Z}^{(t)}}
    \frac{\partial |\nabla^{[k]}_{\bs i} f_{\theta}(z)|^q
    }{\partial (\theta,\xi)},
    \label{eq:out_stochastic_gradient}
\end{align} 
and it unbiasedly estimates the gradient of $U_{\lambda}$ as: 
\[
\mathbb{E}[u_{\lambda}^{(t)}(\theta^{(t)},\xi^{(t)})]=\frac{\partial U_{\lambda}(\theta^{(t)},\xi^{(t)})}{\partial (\theta,\xi)}
\]
under the assumption that the order of operations can be exchanged. 
Given the general difficulty of deriving an explicit form of HOVR and the computational infeasibility of performing numerical integration at each step of gradient descent, this stochastic gradient is defined to bypass the need for integration.

\subsection{Convergence Analysis}

\label{subsec:convergence_analysis}

This section presents the convergence properties of SGSD. 
The following assumptions form the basis of our analysis: let ${\mathcal{F}_t}$ denote the natural filtration associated with ${(\theta^{(t)}, \xi^{(t)})}$, and let $\mathbb{E}_t$ represent the conditional expectation, $\mathbb{E}[ \, \cdot \mid \mathcal{F}_t]$.

\begin{assumption}\label{assume:SGSD}
    We assume the following conditions:
    \begin{enumerate}[(a)]
        \item $\partial U_{\lambda}/\partial (\theta,\xi)$ is Lipschitz continuous with a Lipschitz constant $L>0$; 
        \label{enum:Lipschitz}
        \item $\mathbb{E}_t[u_{\lambda}^{(t)}(\theta^{(t)},\xi^{(t)})] = \partial U_{\lambda}(\theta^{(t)},\xi^{(t)})/\partial (\theta,\xi)$;
        \label{enum:unbiasedness}
        \item There exist $\mu_1, \mu_2>0$ such that
        \begin{equation*}
            \mathbb{E}[\|g_{h,\lambda}^{(t)}(\theta^{(t)},\xi^{(t)})\|_2^2] \le \mu_1 + \mu_2\mathbb{E}[\|g_{h,\lambda}(\theta^{(t)},\xi^{(t)})\|_2^2],
        \end{equation*}
        where $g_{h,\lambda}(\theta^{(t)},\xi^{(t)})=\mathbbm{E}_t[g_{h,\lambda}^{(t)}(\theta^{(t)},\xi^{(t)})]$.
        \label{enum:stochastic_variation}
    \end{enumerate}
\end{assumption}

Assumption~\ref{assume:SGSD}(\ref{enum:Lipschitz}) is used commonly for first-order methods in optimization~\citep{bottou2018optimization}. 
Assumption~\ref{assume:SGSD}(\ref{enum:unbiasedness}) asserts the unbiasedness of the stochastic gradient. By defining $\Delta u_{\lambda}^{(t)} := u_{\lambda}^{(t)}(\theta^{(t)},\xi^{(t)})-\partial U_{\lambda}(\theta^{(t)},\xi^{(t)})/\partial (\theta,\xi)$, the boundedness condition $\mathbb{E}_t[\|\Delta u_{\lambda}^{(t)}\|_2^2] \le \sigma^2$, which is commonly used in the analysis of stochastic gradient descent, implies Assumption~\ref{assume:SGSD}(\ref{enum:stochastic_variation}), which is a natural extension of Assumption 4.3 in \citet{bottou2018optimization}. 
For a detailed derivation, refer to Supplement~\ref{app:derivation_of_SG_bound}.

Due to the strictness of the notation, we herein consider the convergence of a criticality measure
\begin{align*}
    \mathsf{C}(\theta,\xi)
    &:=\dist\left(\partial U_{\lambda}(\theta,\xi)/\partial (\theta,\xi),\{0\}\times\partial V_h(\xi)\right) \\
    &=\inf_{v\in\partial V_h(\xi)} \|\partial U_{\lambda}(\theta,\xi)/\partial (\theta,\xi)-(0,v)\|_2.
\end{align*}
With the unbiasedness assumption, roughly speaking, the convergence of the criticality measure leads to the convergence of $g_{h,\lambda}(\theta^{(t)},\xi^{(t)})$ up to the subdifferential difference. It also indicates the gradient-supergradient of ARTL converges to $0$. See Theorem~\ref{thm:bound-SGSD} for the convergence of the criticality measure. 

Since the current setting involves non-smooth and non-convex optimization, we cannot directly guarantee that the criticality measure converges to $0$ as in convex optimization. However, two forms of convergence can be established.
(i) The first corresponds to the so-called randomized SGD, where the iteration at which the algorithm terminates, denoted by $\tau_T$, is randomly selected after fixing the total number of iterations $T$. This approach has been adopted in prior works such as \citet{ghadimi2013stochastic,xu2019non}.
(ii) The second guarantees that the optimization path necessarily passes through at least one point where the criticality measure becomes zero.

Due to the inherent difficulty of non-convex optimization, neither of these results directly implies convergence at iteration $t = T$. However, in practice, the learning rate decays over time, which suggests that the criticality measure at $t = T$ is typically close to $0$.

\begin{theorem}\label{thm:bound-SGSD}
    \begin{enumerate}[(i)]
    Let $T$ be a natural number representing the number of SGSD iterations.
        \item 
        Suppose that Assumption \ref{assume:SGSD} holds and that the learning rate satisfies $0<\omega_s<2/(L\mu_2)$ for $s=0,1,...,T$.
        Then, we have
        \begin{align*}
            &\mathbb{E}\left[\mathsf{C}(\theta^{(\tau_{{}_T})},\xi^{(\tau_{{}_T})})^2 \right]  
            \le \frac{2F_{h,\lambda}(\theta^{(0)}, \xi^{(0)}) + L\mu_1\sum_{t=0}^T \omega_t^2}{\sum_{t=0}^T (2\omega_t-L\mu_2\omega_t^2)}, \nonumber
        \end{align*}
        where the stopping time $\tau_{{}_T} \in \{0,1,,...,T\}$ is randomly chosen independently of $\{(\theta^{(t)},\xi^{(t)})\}$, with the probability for $s=0,1,\ldots,T$:
        \[
        \mathbb{P}(\tau_{{}_T}=s)=\frac{2\omega_s-L\mu_2\omega_s^2}{\sum_{t=0}^{T}(2\omega_t-L\mu_2\omega_t^2)}.
        \]
        \item Suppose that Assumption \ref{assume:SGSD} holds and that the learning rate satisfies $0<\omega_s<2/(L\mu_2)$.
        Then, we have
        \begin{align*}
            &\min_{t=0,...,T}\mathbb{E}\left[\mathsf{C}(\theta^{(t)},\xi^{(t)})^2\right]  
            \le \frac{2F_{h,\lambda}(\theta^{(0)}, \xi^{(0)}) + L\mu_1\sum_{t=0}^T \omega_t^2}{\sum_{t=0}^T (2\omega_t-L\mu_2\omega_t^2)}. \nonumber
        \end{align*}
    \end{enumerate}
\end{theorem}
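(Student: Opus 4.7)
The plan is to carry out a standard descent–lemma analysis adapted to the difference-of-convex structure $F_{h,\lambda} = U_\lambda - V_h$: derive a one-step descent inequality, take expectations, telescope over $t = 0,\ldots,T$, and finally relate the resulting bound on $\|g_{h,\lambda}\|_2$ to the criticality measure $\mathsf{C}$. First, I would combine the Lipschitz descent lemma applied to $U_\lambda$ (Assumption~\ref{assume:SGSD}(\ref{enum:Lipschitz})) with the subgradient inequality $V_h(\xi^{(t+1)}) \ge V_h(\xi^{(t)}) + \langle v_h(\xi^{(t)}), \xi^{(t+1)} - \xi^{(t)}\rangle$, both evaluated at the update direction $-\omega_t g_{h,\lambda}^{(t)}$. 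Recognising that $g_{h,\lambda}(\theta^{(t)},\xi^{(t)}) = \partial U_\lambda/\partial(\theta,\xi) - (0, v_h(\xi^{(t)})) = \mathbb{E}_t[g_{h,\lambda}^{(t)}]$, the two inequalities combine into the one-step descent
\[
F_{h,\lambda}(\theta^{(t+1)},\xi^{(t+1)}) \le F_{h,\lambda}(\theta^{(t)},\xi^{(t)}) - \omega_t \langle g_{h,\lambda}, g_{h,\lambda}^{(t)}\rangle + \tfrac{L\omega_t^2}{2}\|g_{h,\lambda}^{(t)}\|_2^2.
\]

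Next, I would take $\mathbb{E}_t$, using unbiasedness (Assumption~\ref{assume:SGSD}(\ref{enum:unbiasedness})) to convert the cross term into $-\omega_t \|g_{h,\lambda}\|_2^2$, then take full expectation, apply the variance bound (Assumption~\ref{assume:SGSD}(\ref{enum:stochastic_variation})), and sum from $t=0$ to $T$. Since $F_{h,\lambda}(\theta,\xi) = n^{-1}\|r(\theta)-\xi\|_2^2 + T_h(\xi) + \lambda C_{k,q}(f_\theta) \ge 0$, discarding the nonnegative $\mathbb{E}[F_{h,\lambda}(\theta^{(T+1)},\xi^{(T+1)})]$ on the left yields
\[
\sum_{t=0}^T (2\omega_t - L\mu_2\omega_t^2)\, \mathbb{E}[\|g_{h,\lambda}(\theta^{(t)},\xi^{(t)})\|_2^2] \le 2 F_{h,\lambda}(\theta^{(0)},\xi^{(0)}) + L\mu_1 \sum_{t=0}^T \omega_t^2,
\]
where each weight $2\omega_t - L\mu_2\omega_t^2$ is positive thanks to the step-size restriction $\omega_s < 2/(L\mu_2)$.

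Finally, since $g_{h,\lambda}(\theta,\xi)$ is one specific element of $\partial U_\lambda/\partial(\theta,\xi) - \{0\}\times\partial V_h(\xi)$, we have $\|g_{h,\lambda}(\theta,\xi)\|_2 \ge \mathsf{C}(\theta,\xi)$, so $\|g_{h,\lambda}\|_2^2$ may be replaced by $\mathsf{C}^2$ in the summed inequality. For part~(i), the specified sampling distribution makes $\mathbb{E}[\mathsf{C}(\theta^{(\tau_T)},\xi^{(\tau_T)})^2]$ exactly equal to the resulting weighted average divided by $\sum_t (2\omega_t - L\mu_2\omega_t^2)$, which gives the claimed bound; for part~(ii), the minimum over $t$ is at most that same weighted average. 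The main obstacle I anticipate is handling the difference-of-convex structure cleanly: one must use a single, fixed subgradient $v_h(\xi^{(t)})$ consistently both in the definition of $g_{h,\lambda}$ and in the descent step so that $\mathbb{E}_t[g_{h,\lambda}^{(t)}] = g_{h,\lambda}$ aligns inside the cross term, and verify the nonnegativity of $F_{h,\lambda}$ used to drop the telescoped terminal term. Once these are settled, the remainder is standard SGD bookkeeping in the style of \citet{bottou2018optimization}.
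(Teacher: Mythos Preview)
Your proposal is correct and follows essentially the same route as the paper: a descent lemma for $U_\lambda$ combined with the subgradient inequality for $V_h$ gives the one-step decrease, conditional expectation plus Assumptions~\ref{assume:SGSD}(\ref{enum:unbiasedness}) and~(\ref{enum:stochastic_variation}) yield the summable bound on $\mathbb{E}[\|g_{h,\lambda}\|_2^2]$, nonnegativity of $F_{h,\lambda}$ disposes of the terminal term, and the inequality $\mathsf{C}\le\|g_{h,\lambda}\|_2$ together with the weighted-average/minimum argument finishes (i) and (ii). The paper packages the telescoped inequality as a separate lemma but the content is identical to what you outline.
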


See Supplement~\ref{supp:proof_of_thm:bound-SGSD} for the proof.
Despite the non-smooth nature of our optimization problem, we establish convergence bounds that are comparable to those established in the smooth setting~\citep{ghadimi2013stochastic}. 
Considering the rate of convergence, we last discuss the learning rate schedule. 
If we use common diminishing learning rates that satisfy 
\begin{equation}\label{eq:stepsize-condition}
    \sum_{t=0}^\infty \omega_t=\infty, \, \text{ and } \, \sum_{t=0}^\infty \omega_t^2<\infty,
\end{equation}
the expected measures $\mathbb{E}\left[\mathsf{C}(\theta^{(\tau_{{}_T})},\xi^{(\tau_{{}_T})})^2 \right]$ at $\tau_T$ and $\min_{t=0,...,T}\mathbb{E}\left[\mathsf{C}(\theta^{(t)},\xi^{(t)})^2\right]$ converge to $0$ as $T \to \infty$.
Especially, since
\begin{align*}
    \mathbb{E}\left[\left(\min_{t=0,...,T}\mathsf{C}(\theta^{(t)},\xi^{(t)})\right)^2\right] \le \min_{t=0,...,T}\mathbb{E}\left[\mathsf{C}(\theta^{(t)},\xi^{(t)})^2\right],
\end{align*}
Theorem~\ref{thm:bound-SGSD} (ii) implies the $L^2$ convergence of $\min_{t=0,...,T}\mathsf{C}(\theta^{(t)},\xi^{(t)})$.
A typical choice, such as $\omega_t=\alpha t^{-1}$, yields a convergence rate of $\mathcal{O}((\log T)^{-1})$. However, there exist alternative learning rate schedules that achieve faster convergence, even though they violate the condition~\eqref{eq:stepsize-condition}. See Corollary~\ref{cor:convergence-rate-SGSD}.

\begin{corollary}\label{cor:convergence-rate-SGSD}
    Suppose that Assumption \ref{assume:SGSD} holds.
    Let $\omega_t=\alpha(1+t)^{-1/2}$ with $0<\alpha<2/(L\mu_2)$.
    Then, $\omega_t$ violates the condition \eqref{eq:stepsize-condition} while it attains
    \[
    \mathbb{E}\left[\mathsf{C}(\theta^{(\tau_{{}_T})},\xi^{(\tau_{{}_T})})^2 \right]
    =
    \mathcal{O}(T^{-1/2}\log{T}).
    \]
\end{corollary}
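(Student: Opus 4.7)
The plan is to apply Theorem~\ref{thm:bound-SGSD}(i) directly and then obtain sharp asymptotics for the numerator and denominator under the schedule $\omega_t = \alpha(1+t)^{-1/2}$. First I would verify the theorem's hypothesis: since $(1+t)^{-1/2} \le 1$ for all $t \ge 0$ and $\alpha < 2/(L\mu_2)$, we have $\omega_t < 2/(L\mu_2)$ uniformly in $t$, so Theorem~\ref{thm:bound-SGSD}(i) is applicable and yields the bound to be analyzed.

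Next, I would estimate the two sums appearing in that bound by standard integral comparison:
\begin{align*}
\sum_{t=0}^T \omega_t^2 &= \alpha^2 \sum_{t=0}^T \frac{1}{1+t} = \Theta(\log T),\\
\sum_{t=0}^T \omega_t &= \alpha \sum_{t=0}^T \frac{1}{\sqrt{1+t}} = \Theta(T^{1/2}).
\end{align*}
The first identity shows $\sum_{t=0}^\infty \omega_t^2 = \infty$, confirming that this schedule violates condition~\eqref{eq:stepsize-condition}. For the denominator of Theorem~\ref{thm:bound-SGSD}(i),
\[
\sum_{t=0}^T (2\omega_t - L\mu_2\omega_t^2) = 2\alpha\sum_{t=0}^T (1+t)^{-1/2} - L\mu_2\alpha^2\sum_{t=0}^T (1+t)^{-1} = \Theta(T^{1/2}),
\]
since the $\Theta(T^{1/2})$ term dominates the $\Theta(\log T)$ one for large $T$; positivity of each summand is guaranteed by the stepsize condition verified above. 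The numerator is $2F_{h,\lambda}(\theta^{(0)},\xi^{(0)}) + L\mu_1\alpha^2\sum_{t=0}^T (1+t)^{-1} = \mathcal{O}(\log T)$.

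Dividing numerator by denominator gives $\mathbb{E}[\mathsf{C}(\theta^{(\tau_T)},\xi^{(\tau_T)})^2] = \mathcal{O}(T^{-1/2}\log T)$, as claimed. There is no real obstacle beyond this asymptotic bookkeeping; the conceptual point worth highlighting is that although $\omega_t = \alpha(1+t)^{-1/2}$ violates the classical Robbins--Monro summability condition~\eqref{eq:stepsize-condition}, the polynomial $\Theta(T^{1/2})$ growth in the denominator dominates the merely logarithmic inflation of the numerator. The resulting rate is strictly faster than the $\mathcal{O}((\log T)^{-1})$ obtained from the conventional choice $\omega_t = \alpha t^{-1}$, which explains why this ``non-summable-squared'' schedule is preferable in practice.
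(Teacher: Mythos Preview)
Your proposal is correct and follows essentially the same approach as the paper: apply Theorem~\ref{thm:bound-SGSD}(i), then control the numerator and denominator via integral comparison of $\sum (1+t)^{-1}$ and $\sum (1+t)^{-1/2}$. The paper writes out the explicit upper and lower integral bounds rather than using $\Theta$-notation, but the argument is the same; your additional remarks (verifying the stepsize hypothesis and explicitly noting that $\sum \omega_t^2 = \infty$ violates~\eqref{eq:stepsize-condition}) are helpful and not in the paper's proof.
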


A similar result holds for $\min_{t=0,...,T}\mathbb{E}\left[\mathsf{C}(\theta^{(t)},\xi^{(t)})^2\right]$. 
Extending the Corollary \ref{cor:convergence-rate-SGSD}, we can also prove that $\beta=1/2$ yields the fastest convergence rate (in terms of the upper-bound) among the learning rate schedule $\omega_t=(1+t)^{-\beta}$, $\beta>0$. 
As it diverges from the main focus of this paper, it will be omitted.

\section{Experiments}
\label{sec:experiments}

The proposed approach is evaluated through synthetic dataset experiments in Section~\ref{subsec:synthetic_dataset_experiments}. We explore parameter selection via cross-validation, utilizing the robust trimmed loss in Section~\ref{subsec:robust_parameter_selection}. Benchmark dataset experiments are conducted in Section~\ref{subsec:benchmark_dataset_experiments}. 
Source codes to reproduce these experimental results are provided in \url{https://github.com/oknakfm/ARTL}.

\subsection{Synthetic Dataset Experiments}

\label{subsec:synthetic_dataset_experiments}

In this section, we assess the performance of our method on synthetic datasets. We compare it against several robust regression baselines to evaluate their effectiveness in handling outliers.

\paragraph{Dataset Generation:} We consider four different ground-truth functions $f(x)$ to generate the datasets:
\begin{itemize}
    \item checkered: $\sin(2x_1) \cos(2x_2)$,
    \item volcano: $\exp\left( -\left( \{x_1 - \pi\}^2 + \{x_2 - \pi\}^2 - 1 \right)^2 \right)$,
    \item stripe: $\sin\left( 2(x_1 + x_2) \right)$,
    \item plane: $x_1-x_2$. 
\end{itemize}

Refer to Figure~\ref{fig:true_functions} in Supplement~\ref{supp:experiments} for an illustration of these functions. 
For each function, we generate a grid of $n = 10^2 = 100$ data points over the domain $[0, 2\pi]^2$. The input variables $(x_1, x_2)$ are uniformly spaced within this domain. The target variable $y$ is computed by adding Gaussian noise to the true function value: $y = f(x) + \epsilon$, where $\epsilon \sim N(0, 0.2^2)$.

\paragraph{Outlier Injection:} To simulate outliers, we introduce anomalies into the dataset by replacing the target values of a small subset of data points. Specifically, we randomly select $3\%$ of the data points and assign their target values as $y_i = 5 + \delta_i$, where $\delta_i$ is a small random perturbation sampled uniformly from $[-0.1, 0.1]$.

\paragraph{Implementation Details:} 
All experiments are implemented using \verb|Python| with the \verb|NumPy|, \verb|PyTorch|, and \verb|scikit-learn| libraries. The neural networks consist of three hidden layers with 100 neurons each (i.e., MLP defined in Section \ref{subsec:NN} with $Q=3,L_1=L_2=L_3=100$) and use the sigmoid activation function. Training is performed using the Adam optimizer with a learning rate of 0.01 for 5000 iterations. As a learning rate schedule, step decay (with $\gamma=0.5$ for each $1000$ iteration) is employed. Unless otherwise noted, the HOVR regularization term uses $\lambda = 10^{-3}$, $q = 2$, and $w_{\bs i}=J^{-1}\mathbbm{1}(i_1=i_2=\cdots=i_k)$.

\paragraph{Baselines:} We employ the following baselines.

\begin{itemize}
    \item Linear regression with a robust Huber's loss~\citep{huber1981robust}.
    \item Linear regression with RANSAC~\citep{fischler1981random}: An iterative method that fits a model to subsets of the data to identify inliers, effectively excluding outliers from the final model.
    \item Support vector regression~\citep{drucker1997support}: A regression method using the $\epsilon$-insensitive loss function and an RBF kernel to capture nonlinear relationships.
    \item Neural network with robust Huber's Loss~\citep{huber1981robust} and Tukey's biweight loss~\citep{beaton1974fitting}. Tukey's loss is implemented using the constant $c=4.685$. 
    \item Neural network with a label noise regularization~\citep{han2020sigua}, that is designed to mitigate the impact of noisy labels.
    \item Neural Network with RANSAC-like Approach \citep{choi2009performance}: A neural network that iteratively excludes high-loss samples during training, similar to the RANSAC method~\citep{fischler1981random}.
\end{itemize}

Our proposed method, SGSD applied to TTL+HOVR, is evaluated with different two orders of derivatives $k \in \{1, 2\}$. $h$ is specified as a rounded integer of $0.9n$.

\paragraph{Evaluation Metrics:} We assess the performance of each method using the predictive mean squared error (PMSE) on a separate test set of $n_{\text{test}} = 10^4$ randomly sampled points from the same domain. Each experiment is repeated five times with different random seeds, and we report the mean and standard deviation of the PMSE across these runs.

\begin{table*}[!ht]
\centering
\caption{Synthetic dataset experiments. The proposed approach (NN with ARTL) is compared with robust regression methods. The \best{best score} is bolded and blue-colored, and the \second{second best score} is underlined.}
\label{table:synthetic-comparison}
\scalebox{0.85}{
\begin{tabular}{l|ccc|c}
\toprule 
& \multicolumn{3}{c|}{Non-linear} & Linear \\
& checkered & volcano & stripe & plane \\
\hline
Linear Reg. with Huber's Loss & $0.124 \, (0.004)$ & $0.130 \, (0.003)$ & $0.498 \, (0.016)$ & $\second{0.001} \, (0.001)$ \\ 
Linear Reg. with RANSAC & $0.140 \, (0.013)$ & $0.186 \, (0.058)$ & $0.871 \, (0.277)$ & $\best{0.001} \, (0.001)$\\
Support Vector Reg. with RBF Kernel & $0.127 \, (0.008)$ & $0.113 \, (0.020)$ & $0.508 \, (0.031)$ & $0.006 \, (0.002)$\\
NN with Huber's Loss & $0.634 \, (0.608)$ & $1.031 \, (0.861)$ & $0.488 \, (0.475)$ & $0.043 \, (0.025)$\\
NN with Tukey's Loss & $0.458 \, (0.655)$ & $0.413 \, (0.630)$ & $0.304 \, (0.395)$ & $0.017 \, (0.009)$\\
NN with Label Noise Reg. & $1.155 \, (1.068)$ & $0.872 \, (0.659)$ & $0.561 \, (0.498)$ & $0.756 \, (0.945)$\\
NN with RANSAC & $0.160 \, (0.036)$ & $0.142 \, (0.009)$ & $0.527 \, (0.018)$ & $0.011 \, (0.016)$\\
\rowcolor{gray!30}
NN with ARTL ($h=0.9n,k=1$) & $\second{0.088} \, (0.090)$ & $\second{0.082} \, (0.076)$ & $\second{0.223} \, (0.238)$ & $0.010 \, (0.004)$ \\
\rowcolor{gray!30}
NN with ARTL ($h=0.9n,k=2$) & $\best{0.061} \, (0.016)$ & $\best{0.040} \, (0.029)$ & $\best{0.119} \, (0.047)$ & $0.007 \, (0.001)$ \\
\bottomrule
\end{tabular}
}
\end{table*}

\paragraph{Results:} Experimental results are summarized in Table~\ref{table:synthetic-comparison}. Visualization of the estimated functions is also listed in Supplement~\ref{supp:experiments}. Our interpretation is summarized as follows. 

For the linear `plane' function, which can be effectively modeled by simple linear approaches, linear regression using robust loss functions achieves excellent prediction accuracy. While simpler models like linear regression and support vector regression are well-suited for predicting linear functions, the proposed approach using neural network at least achieves the best performance among neural network-based methods.

In contrast, for the non-linear `checkered', `volcano', and `stripe' functions, simpler models are unable to capture the complex patterns. The proposed neural network with TTL+HOVR achieves the best scores across all methods. It is important to note that neural networks trained with robust loss functions, such as Huber's and Tukey's, tend to overfit to outliers, as illustrated in Figure~\ref{fig:illustration}.

\paragraph{Ablation Study:} 

To demonstrate the effectiveness of combining trimmed loss with HOVR, we present an ablation study in Figure~\ref{fig:ablation} with the `checkered' function. 
As we can easily examine, (\subref{subfig:trim}): The NN trained with the trimmed loss (without HOVR) reduces the overall influence of the outliers due to the robust loss but still overfits to the outliers themselves. As a result, the non-outlier regions are correctly estimated, but sharp peaks appear near the outliers. (\subref{subfig:HOVR}): The influence of outliers is not properly handled by the NN trained only with HOVR (without the robust loss). The regularization suppresses the overall variability of the neural network, but the presence of outliers affects not only the regions with outliers but also other areas of the entire covariate space. (\subref{subfig:trim+HOVR}): The proposed ARTL (TTL+HOVR) yields significantly improved results.

\begin{figure}[!ht]
\centering
\begin{minipage}{0.24\textwidth}
\centering
\includegraphics[width=0.9\textwidth]{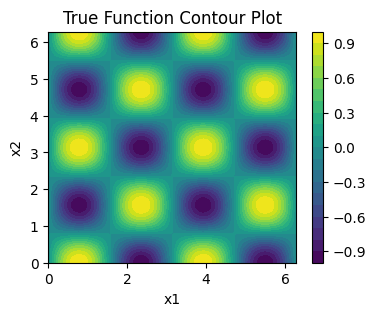}
\subcaption{True}
\end{minipage}
\begin{minipage}{0.24\textwidth}
\centering
\includegraphics[width=0.9\textwidth]{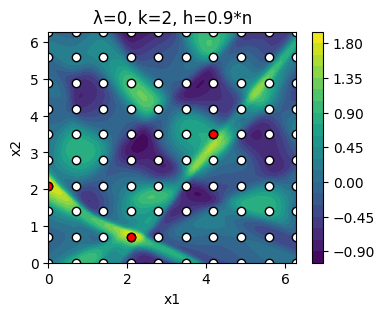}
\subcaption{Only TTL}
\label{subfig:trim}
\end{minipage} 
\begin{minipage}{0.24\textwidth}
\centering
\includegraphics[width=0.9\textwidth]{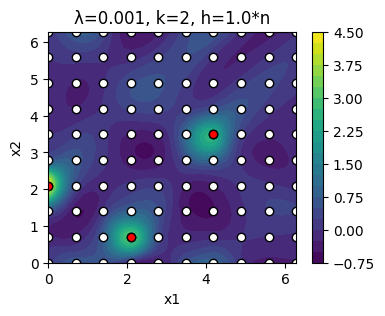}
\subcaption{Only HOVR}
\label{subfig:HOVR}
\end{minipage}
\begin{minipage}{0.24\textwidth}
\centering
\includegraphics[width=0.9\textwidth]{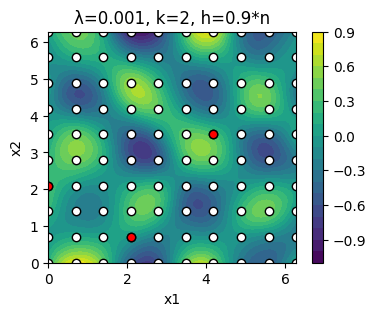}
\subcaption{TTL+HOVR}
\label{subfig:trim+HOVR}
\end{minipage}
\caption{Ablation study.}
\label{fig:ablation}
\end{figure}

The average PMSE values (and the standard deviations) are: (\subref{subfig:trim}) $0.187 \, (0.021)$, (\subref{subfig:HOVR}) $0.356 \, (0.057)$, (\subref{subfig:trim+HOVR}) $0.064 \, (0.005)$, and these values clearly highlight the superior performance of our proposed approach (ARTL).



\begin{table*}[!ht]
\centering 
\caption{Benchmark dataset experiments. The proposed approach (NN with ARTL) is compared with robustly trained neural networks. The \best{best score} is bolded and blue-colored, and the \second{second best score} is underlined.}
\label{table:benchmark}
\scalebox{0.85}{
\begin{tabular}{lccc}
\toprule
& Auto MPG & Liver Disorders & Real Estate Valuation \\
\hline
NN with Huber's Loss & $0.808 \, (0.363)$ & $3.269 \, (0.798)$ & $2.148 \, (0.710)$ \\
NN with Tukey's Loss & $1.126 \, (0.839)$ & $2.976 \, (0.477)$ & $2.127 \, (0.746)$ \\
NN with Label Noise Reg. & $0.751 \, (0.410)$ & $2.016 \, (0.319)$ & $1.433 \, (0.383)$ \\
NN with RANSAC & $1.078 \, (0.172)$ & $\best{1.146} \, (0.158)$ & $1.087 \, (0.158)$ \\
\rowcolor{gray!30}
NN with ARTL ($h=0.9n,k=1,\lambda=10^{-4}$) & $0.229 \, (0.040)$ & $1.617 \, (0.273)$ & $0.470 \, (0.132)$ \\
\rowcolor{gray!30}
NN with ARTL ($h=0.9n,k=1,\lambda=10^{-3}$) & $\best{0.211} \, (0.038)$ & $\second{1.546} \, (0.190)$ & $\best{0.453} \, (0.144)$ \\
\rowcolor{gray!30}
NN with ARTL ($h=0.9n,k=2,\lambda=10^{-4}$) & $0.228 \, (0.053)$ & $1.554 \, (0.295)$ & $\second{0.467} \, (0.166)$ \\
\rowcolor{gray!30}
NN with ARTL ($h=0.9n,k=2,\lambda=10^{-3}$) & $\second{0.223} \, (0.064)$ & $1.633 \, (0.145)$ & $0.479 \, (0.155)$ \\
\bottomrule
\end{tabular}
}
\end{table*}

\subsection{Robust Parameter Selection}
\label{subsec:robust_parameter_selection}

In this section, we discuss the process of robust parameter selection within our proposed framework. The robust validation score is defined as follows:

\paragraph{Robust Validation Score:} We start by splitting the outlier-contaminated data into a training set (80\%) and a validation set (20\%). The validation score is computed using the robust trimmed loss ($h=0.9n$).

In this experiment, we compute the average PMSE over $10^4$ randomly generated test samples (excluding outliers) and the average validation score (using the outlier-contaminated data) across $10$ trials, each with a different random seed. The majority of the experimental settings are inherited from Section~\ref{subsec:synthetic_dataset_experiments}.

\paragraph{Results:} Figure~\ref{fig:validation} presents the results. The averaged PMSE and the averaged validation scores are displayed, with error bars indicating the standard deviation. The strong correlation highlights that the robust validation score, based on the trimmed loss, is a reliable estimator of PMSE. Consequently, we think that this robust validation score can be effectively used for parameter selection, even in the presence of outliers.

\begin{figure}[!ht]
\centering
\includegraphics[width=0.4\textwidth]{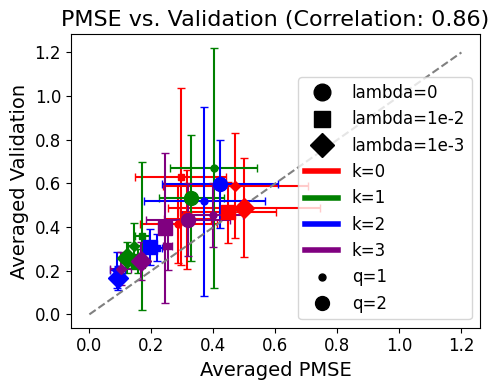}
\caption{The average robust validation scores are compared with the average PMSE, in the presence of outliers. The Pearson correlation coefficient between the averaged values is $0.86$, and the Spearman's rank correlation coefficient is $0.89$.}
\label{fig:validation}
\end{figure}

\subsection{Benchmark Dataset Experiments}
\label{subsec:benchmark_dataset_experiments}

We validate our proposed algorithm through experiments on typical benchmark datasets: the Auto MPG, Liver Disorders, and Real Estate Valuation datasets from the UCI Machine Learning Repository~\citep{UCI}\footnote{\url{https://archive.ics.uci.edu}}, with artificially introduced outliers. For the Auto MPG dataset, the `origin' and `car\_name' columns were removed, and for the Real Estate Valuation dataset, the `X1 transaction date', `X5 latitude', and `X6 longitude' columns were excluded due to a data type issue. Consequently, the sample size and the covariate dimensions are, $n=398, J=5$ for Auto MPG, $n=345, J=5$ for Liver Disorders, and $n=414, J=3$ for Real Estate Valuation datasets. Implementation details are inherited from Section~\ref{subsec:synthetic_dataset_experiments}.

\paragraph{Dataset Formatting:} In each experiment among $5$ runs, we randomly split each dataset into a training set (70\%) and a test set (30\%). To introduce outliers, 5\% of the outcomes in the training set are randomly selected and modified by adding two times the standard deviation to their original values. 

\paragraph{Baselines:} We utilize non-linear neural networks with Huber’s and Tukey’s loss functions, label noise regularization, and RANSAC, as also employed in synthetic dataset experiments in Section~\ref{subsec:synthetic_dataset_experiments}.

\paragraph{Results:} Experimental results are summarized in Table~\ref{table:benchmark}. 
Average PMSE among the $5$ runs is provided with the standard deviation in the parenthesis. 
For all datasets, the proposed method consistently achieves the best or nearly best results. While the RANSAC approach yields the top score for the Liver Disorders dataset, its performance is unstable, and it produces the second worst score for the Auto MPG dataset.

\section{Conclusion}
\label{sec:conclusion}

In this study, we introduced a stochastic optimization algorithm called stochastic gradient–supergradient descent (SGSD) for training neural networks using the augmented and regularized trimmed loss (ARTL). ARTL combines a transformed trimmed loss (TTL) with higher-order variation regularization (HOVR). We proved that minimizing ARTL yields a high functional breakdown point, defined in terms of HOV. Furthermore, we provided a theoretical proof of the convergence of SGSD and demonstrated its effectiveness through experiments. A key advantage of our method, compared to traditional techniques such as splines, is its broad applicability to a wide range of models, including neural networks with complex architectures.

\section*{Acknowledgement}
A. Okuno was supported by JSPS KAKENHI (21K17718, 22H05106, 25K03087). 
S. Yagishita was supported by JSPS KAKENHI (25K21158). 
We would like to thank Hironori Fujisawa and Keisuke Yano for the helpful discussions.


\clearpage

\appendix

\onecolumn

{\LARGE\bf Supplementary Materials}

\begin{itemize}
\item (Title) Outlier-robust neural network training: variation regularization meets trimmed loss to prevent functional breakdown
\item (Authors) Akifumi Okuno and Shotaro Yagishita
\end{itemize}

\section{Regularization for Simpler Models}
\label{app:regularization_kernel}
A linear basis function model $f^{\dagger}_{\theta}(x)=\sum_{l=1}^{L} \theta_l \phi_l(x)$ with user-specified basis functions $\{\phi_l\}$ encompasses kernel and spline regression models as special cases. 
Here, consider the univariate case $J=1$ for simplicity. 
With the gram matrix $G^{[k]}=(g^{[k]}_{ij}),g_{ij}^{[k]}=\int_{\Omega}\nabla^{[k]}\phi_i(x) \nabla^{[k]}\phi_j(x)\diff x$, we have 
\begin{align*}
    \|\nabla^{[k]} f^{\dagger}_{\theta}\|_{L^2(\Omega)}^2
    &=
    \int_{\Omega} \{\nabla^{[k]}f^{\dagger}_{\theta}(x)\}^2 \diff x \\
    &=
    \sum_{l_1=1}^{L}\sum_{l_2=1}^{L}
    \theta_{l_1} \theta_{l_2} 
    \underbrace{\int_{\Omega} \nabla^{[k]}\phi_{l_1}(x)\nabla^{[k]}\phi_{l_2}(x) \diff x}_{=g_{l_1l_2}} \\
    &=
    \sum_{l_1=1}^{L}\sum_{l_2=1}^{L}\theta_{l_1}\theta_{l_2} g_{l_1l_2}^{[k]}
    =
    \langle \theta, G^{[k]} \theta \rangle.
\end{align*}
It also reduces to the ridge regularization $\|\theta\|_2^2$ by assuming that $\{\nabla^{[k]}\phi_l\}$ is orthonormal (i.e., $G^{[k]}$ is an identity matrix). See, e.g., \citet{smale2007learning} for more details of the regularization for kernel regression.

\section{Detailed Proofs}
\label{supp:proofs}

\subsection{Proof of Theorem~\ref{theo:breakdown_point}}
\label{supp:proof_of_theo:breakdnwon_point}

Fix \( \theta_0 \in \Theta \) arbitrarily. 
Consider an arbitrary contaminated dataset \( \widetilde{Z}^{(m)} \) satisfying \( m \le n-h \).  
Let \( I_R^{(m)} \) denote the index set of the data points that have not been replaced in \( \widetilde{Z}^{(m)} \).  
We define the loss function and the residuals determined by \( \widetilde{Z}^{(m)} \) as \( \widetilde{L}^{(m)}(\theta) \) and \( \widetilde{r}^{(m)}(\theta) \), respectively.  
Letting \( \widetilde{Z}^{(m)} = \{(\Tilde{x}_i, \Tilde{y}_i)\}_{i=1}^{n} \),
the following holds: 
\begin{align*}
    \widetilde{L}^{(m)}(\theta_0) 
    &=
    \frac{1}{2} T_h(\widetilde{r}^{(m)}(\theta_0)) + \lambda C_{k,q}(f_{\theta_0}) \\
    &= 
    \min_{I_h \subset \{1,2,\ldots,n\}:|I_h|=h}
    \frac{1}{2n}
    \sum_{i \in I_h}
    (\tilde{y}_i-f_{\theta_0}(\tilde{x}_i))^2
    +
    \lambda C_{k,q}(f_{\theta_0}) \\
    &\le 
    \frac{1}{2n} \sum_{i \in I_R^{(m)}} (\Tilde{y}_i - f_{\theta_0}(\Tilde{x}_i))^2 + \lambda C_{k,q}(f_{\theta_0}) 
    \qquad (\because \, |I_R^{(m)}| \ge h) \\
    &=
    \frac{1}{2n} \sum_{i \in I_R^{(m)}} (y_i - f_{\theta_0}(x_i))^2 + \lambda C_{k,q}(f_{\theta_0}) 
    \qquad (\because \, (\Tilde{x}_i, \Tilde{y}_i) = (x_i, y_i) ~~\mbox{for}~~ i \in I_R^{(m)}) \\
    &\le 
    \frac{1}{2n} \sum_{i=1}^n (y_i - f_{\theta_0}(x_i))^2 + \lambda C_{k,q}(f_{\theta_0}) 
    =: M_0.
\end{align*}
Note that \( M_0 \) is finite and independent of \( \widetilde{Z}^{(m)} \).
Letting \( \displaystyle \hat{\theta}^{(m)} \in \argmin_{\theta \in \Theta} \widetilde{L}^{(m)}(\theta) \), we have:
\begin{equation}
    C_{k,q}(f_{\hat{\theta}^{(m)}}) 
    \le \frac{\widetilde{L}^{(m)}(\hat{\theta}^{(m)})}{\lambda} 
    \le \frac{\widetilde{L}^{(m)}(\theta_0)}{\lambda} 
    \le \frac{M_0}{\lambda}
    < \infty,
\end{equation}
which indicates the assertion.

\qed

\subsection{Derivation of Assumption~\ref{assume:SGSD}(\ref{enum:stochastic_variation})}
\label{app:derivation_of_SG_bound}

As it holds that
\begin{align}
    \mathbb{E}[\innerprod{\Delta u_{h,\lambda}^{(t)}}{g_{h,\lambda}(\theta^{(t)},\xi^{(t)})}]
    = \mathbb{E}[\mathbb{E}_t[\innerprod{\Delta u_{h,\lambda}^{(t)}}{g_{h,\lambda}(\theta^{(t)},\xi^{(t)})}]] 
    = \mathbb{E}[\innerprod{\underbrace{\mathbb{E}_t[\Delta u_{h,\lambda}^{(t)}]}_{=0}}{g_{h,\lambda}(\theta^{(t)},\xi^{(t)})}] = 0, \label{eq:ip0}
\end{align}
we have
\begin{align*}
    \mathbb{E}[\|g_{h,\lambda}^{(t)}(\theta^{(t)},\xi^{(t)})\|_2^2]
    &= 
    \mathbb{E}[\|\Delta u_{h,\lambda}^{(t)}+g_{h,\lambda}(\theta^{(t)},\xi^{(t)})\|_2^2] \\
    &= \mathbb{E}[\|\Delta u_{h,\lambda}^{(t)}\|_2^2] 
    + 2 \underbrace{\mathbb{E}[\innerprod{\Delta u_{h,\lambda}^{(t)}}{g_{h,\lambda}(\theta^{(t)},\xi^{(t)})}]}_{=0 \, (\because \, \text{equation} \, \eqref{eq:ip0})}
    + \mathbb{E}[\|g_{h,\lambda}(\theta^{(t)},\xi^{(t)})\|_2^2]\\
    &\le \sigma^2 + \mathbb{E}[\|g_{h,\lambda}(\theta^{(t)},\xi^{(t)})\|_2^2].
\end{align*}

\qed

\subsection{Proof of Theorem \ref{thm:bound-SGSD}}
\label{supp:proof_of_thm:bound-SGSD}
To prove Theorem \ref{thm:bound-SGSD}, we present the following lemma.

\begin{lemma}\label{lemma:bound-SGSD}
    Suppose that Assumption \ref{assume:SGSD} holds.
    Then, it holds that
    \begin{align}
        \sum_{t=0}^T (2\omega_t-L\mu_2\omega_t^2)\mathbb{E}\left[\mathsf{C}(\theta^{(t)},\xi^{(t)})^2\right] 
        \le 
        2F_{h,\lambda}(\theta^{(0)}, \xi^{(0)}) + L\mu_1\sum_{t=0}^T \omega_t^2
        \label{eq:lemma:bound-SGSD}
    \end{align}
    for any $T\ge0$.
\end{lemma}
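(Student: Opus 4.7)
The plan is to derive a per-iteration descent inequality for $F_{h,\lambda}$ and telescope it over $t=0,\ldots,T$. First, the Lipschitz smoothness of $\partial U_{\lambda}/\partial(\theta,\xi)$ from Assumption~\ref{assume:SGSD}(\ref{enum:Lipschitz}) gives, via the standard descent lemma,
\begin{equation*}
U_{\lambda}(\theta^{(t+1)},\xi^{(t+1)}) \le U_{\lambda}(\theta^{(t)},\xi^{(t)}) + \innerprod{\partial U_{\lambda}/\partial(\theta,\xi)}{\Delta^{(t)}} + \tfrac{L}{2}\|\Delta^{(t)}\|_2^2,
\end{equation*}
where $\Delta^{(t)} := -\omega_t g_{h,\lambda}^{(t)}(\theta^{(t)},\xi^{(t)})$ is the SGSD increment. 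Because $V_h$ is convex and $v_h(\xi^{(t)}) \in \partial V_h(\xi^{(t)})$, the subgradient inequality yields the one-sided bound
\begin{equation*}
-V_h(\xi^{(t+1)}) \le -V_h(\xi^{(t)}) - \innerprod{(0,v_h(\xi^{(t)}))}{\Delta^{(t)}}.
\end{equation*}
Adding these two and recognizing that $\partial U_{\lambda}/\partial(\theta,\xi) - (0,v_h(\xi^{(t)})) = g_{h,\lambda}(\theta^{(t)},\xi^{(t)})$, I would obtain the per-step estimate
\begin{equation*}
F_{h,\lambda}(\theta^{(t+1)},\xi^{(t+1)}) \le F_{h,\lambda}(\theta^{(t)},\xi^{(t)}) - \omega_t \innerprod{g_{h,\lambda}^{(t)}}{g_{h,\lambda}(\theta^{(t)},\xi^{(t)})} + \tfrac{L\omega_t^2}{2}\|g_{h,\lambda}^{(t)}\|_2^2.
\end{equation*}

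Next, I would take the conditional expectation $\mathbb{E}_t$ and apply the unbiasedness in Assumption~\ref{assume:SGSD}(\ref{enum:unbiasedness}) (noting that $v_h(\xi^{(t)})$ is $\mathcal{F}_t$-measurable, so $\mathbb{E}_t[g_{h,\lambda}^{(t)}]=g_{h,\lambda}(\theta^{(t)},\xi^{(t)})$), which reduces the cross term to $\|g_{h,\lambda}(\theta^{(t)},\xi^{(t)})\|_2^2$. Taking full expectations and then invoking the variance bound of Assumption~\ref{assume:SGSD}(\ref{enum:stochastic_variation}) on the remaining $\mathbb{E}[\|g_{h,\lambda}^{(t)}\|_2^2]$ yields
\begin{equation*}
\mathbb{E}[F_{h,\lambda}(\theta^{(t+1)},\xi^{(t+1)})] - \mathbb{E}[F_{h,\lambda}(\theta^{(t)},\xi^{(t)})] \le -\bigl(\omega_t - \tfrac{L\mu_2\omega_t^2}{2}\bigr)\mathbb{E}[\|g_{h,\lambda}(\theta^{(t)},\xi^{(t)})\|_2^2] + \tfrac{L\mu_1\omega_t^2}{2}.
\end{equation*}
Since $v_h(\xi^{(t)}) \in \partial V_h(\xi^{(t)})$, the infimum definition of the criticality measure gives $\mathsf{C}(\theta^{(t)},\xi^{(t)})^2 \le \|g_{h,\lambda}(\theta^{(t)},\xi^{(t)})\|_2^2$, so under the step-size condition $2\omega_t - L\mu_2\omega_t^2 > 0$ one may replace $\|g_{h,\lambda}\|_2^2$ by $\mathsf{C}^2$ in the above inequality.

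Finally, I would multiply by $2$, sum from $t=0$ to $T$ so that the $F_{h,\lambda}$ terms telescope, and then use $F_{h,\lambda}(\theta,\xi)=\tfrac{1}{n}\|r(\theta)-\xi\|_2^2 + T_h(\xi) + \lambda C_{k,q}(f_{\theta}) \ge 0$ (from the TTL expression~\eqref{eq:TTL_expression}) to discard the non-positive trailing term $-2\mathbb{E}[F_{h,\lambda}(\theta^{(T+1)},\xi^{(T+1)})]$, giving the claimed bound~\eqref{eq:lemma:bound-SGSD}. The main obstacle is the non-smoothness of $V_h$, which rules out a direct descent-lemma argument on $F_{h,\lambda}$. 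The resolution is to exploit the specific structure of SGSD: a single fixed subgradient $v_h(\xi^{(t)})$ appears both in the update direction and in the convex linearization of $V_h$, which makes the subgradient inequality dovetail with the smooth descent lemma and renders the bound $\|g_{h,\lambda}\|_2^2 \ge \mathsf{C}^2$ automatic. Once these ingredients are in place, the remainder proceeds essentially along the classical smooth SGD analysis of \citet{bottou2018optimization}.
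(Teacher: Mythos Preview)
Your proposal is correct and follows essentially the same route as the paper's proof: descent lemma for $U_\lambda$, subgradient inequality for $-V_h$, add, take expectation using unbiasedness and the variance bound, telescope, drop the nonnegative $F_{h,\lambda}(\theta^{(T+1)},\xi^{(T+1)})$, and pass from $\|g_{h,\lambda}\|_2^2$ to $\mathsf{C}^2$ via the infimum definition. The only cosmetic difference is that the paper front-loads the reduction $\mathsf{C}(\theta^{(t)},\xi^{(t)})\le\|g_{h,\lambda}(\theta^{(t)},\xi^{(t)})\|_2$ as a ``suffices to show'' step, whereas you invoke it after the telescoping; both versions implicitly rely on the nonnegativity of the coefficients $2\omega_t-L\mu_2\omega_t^2$ at that step.
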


\begin{proof}
    As the inequality
    \begin{align*}
        \mathsf{C}(\theta^{(t)},\xi^{(t)}) &= \inf_{v\in\partial V_h(\xi^{(t)})} \|\partial U_{\lambda}(\theta^{(t)},\xi^{(t)})/\partial (\theta,\xi)-(0,v)\|_2\\
        &\le \|\partial U_{\lambda}(\theta^{(t)},\xi^{(t)})/\partial (\theta,\xi)-(0,v_h(\xi^{(t)}))\|_2\\
        &= \|g_{h,\lambda}(\theta^{(t)},\xi^{(t)})\|_2
    \end{align*}
    holds, it suffices to prove that $\sum_{t=0}^{T}(2\omega_t-L\mu_2\omega_t^2)\mathbb{E}[\|g_{h,\lambda}(\theta^{(t)},\xi^{(t)})\|_2^2]$ (which upper-bounds the left-hand side of \eqref{eq:lemma:bound-SGSD}) is upper-bounded by the right-hand side of \eqref{eq:lemma:bound-SGSD}. 

    Since $\partial U_{\lambda}/\partial (\theta,\xi)$ is Lipschitz continuous with the Lipschitz constant $L$, it follows from the well-known descent lemma \citep[Lemma 5.7]{beck2017first} that
    \begin{align*}
        &U_{\lambda}(\theta^{(t+1)},\xi^{(t+1)})\\
        &\le U_{\lambda}(\theta^{(t)}, \xi^{(t)}) + \innerprod{\partial U_{\lambda}(\theta^{(t)},\xi^{(t)})/\partial (\theta,\xi)}{(\theta^{(t+1)},\xi^{(t+1)})-(\theta^{(t)},\xi^{(t)})} + \frac{L}{2}\|(\theta^{(t+1)},\xi^{(t+1)})-(\theta^{(t)},\xi^{(t)})\|_2^2.
    \end{align*}
    On the other hand, the definition of the subgradient $v_h$ implies that
    \begin{equation*}
        -V_h(\xi^{(t+1)}) \le -V_h(\xi^{(t)}) - \innerprod{v_h(\xi^{(t)})}{\xi^{(t+1)}-\xi^{(t)}}.
    \end{equation*}
    Consequently, we have
    \begin{align*}
        &F_{h,\lambda}(\theta^{(t+1)}, \xi^{(t+1)})\\
        &\le F_{h,\lambda}(\theta^{(t)}, \xi^{(t)}) + \innerprod{g_{h,\lambda}(\theta^{(t)},\xi^{(t)})}{(\theta^{(t+1)}, \xi^{(t+1)})-(\theta^{(t)}, \xi^{(t)})} + \frac{L}{2}\|(\theta^{(t+1)}, \xi^{(t+1)})-(\theta^{(t)}, \xi^{(t)})\|_2^2\\
        &=  F_{h,\lambda}(\theta^{(t)}, \xi^{(t)}) - \omega_t \innerprod{g_{h,\lambda}(\theta^{(t)},\xi^{(t)})}{g_{h,\lambda}^{(t)}(\theta^{(t)},\xi^{(t)})} + \frac{L\omega_t^2}{2}\|g_{h,\lambda}^{(t)}(\theta^{(t)},\xi^{(t)})\|_2^2.
    \end{align*}
    Taking the expectation yields
    \begin{align*}
        &\hspace{-2em}\mathbb{E}[F_{h,\lambda}(\theta^{(t+1)}, \xi^{(t+1)})]\\
        &\le \mathbb{E}[F_{h,\lambda}(\theta^{(t)}, \xi^{(t)})] - \omega_t \mathbb{E}[\innerprod{g_{h,\lambda}(\theta^{(t)},\xi^{(t)})}{g_{h,\lambda}^{(t)}(\theta^{(t)},\xi^{(t)})}] + \frac{L\omega_t^2}{2}\mathbb{E}[\|g_{h,\lambda}^{(t)}(\theta^{(t)},\xi^{(t)})\|_2^2]\\
        &\overset{\text{Tower property}}{=} \mathbb{E}[F_{h,\lambda}(\theta^{(t)}, \xi^{(t)})] - \omega_t \mathbb{E}[\mathbb{E}_t[\innerprod{g_{h,\lambda}(\theta^{(t)},\xi^{(t)})}{g_{h,\lambda}^{(t)}(\theta^{(t)},\xi^{(t)})}]] + \frac{L\omega_t^2}{2}\mathbb{E}[\|g_{h,\lambda}^{(t)}(\theta^{(t)},\xi^{(t)})\|_2^2]\\
        &= \mathbb{E}[F_{h,\lambda}(\theta^{(t)}, \xi^{(t)})] - \omega_t \mathbb{E}[\innerprod{g_{h,\lambda}(\theta^{(t)},\xi^{(t)})}{\mathbb{E}_t[g_{h,\lambda}^{(t)}(\theta^{(t)},\xi^{(t)})]}] + \frac{L\omega_t^2}{2}\mathbb{E}[\|g_{h,\lambda}^{(t)}(\theta^{(t)},\xi^{(t)})\|_2^2]\\
        &\overset{\text{Assumption}~ \ref{assume:SGSD}(\ref{enum:unbiasedness})}{=} \mathbb{E}[F_{h,\lambda}(\theta^{(t)}, \xi^{(t)})] - \omega_t \mathbb{E}[\|g_{h,\lambda}(\theta^{(t)},\xi^{(t)})\|_2^2] + \frac{L\omega_t^2}{2}\mathbb{E}[\|g_{h,\lambda}^{(t)}(\theta^{(t)},\xi^{(t)})\|_2^2]\\
        &\overset{\text{Assumption}~ \ref{assume:SGSD}(\ref{enum:stochastic_variation})}{\le} \mathbb{E}[F_{h,\lambda}(\theta^{(t)}, \xi^{(t)})] - \left(\omega_t-\frac{L\mu_2\omega_t^2}{2}\right) \mathbb{E}[\|g_{h,\lambda}(\theta^{(t)},\xi^{(t)})\|_2^2] + \frac{L\mu_1\omega_t^2}{2}.
    \end{align*}  
    By summing up, we obtain 
    \begin{align*}
        \sum_{t=0}^T \left(\omega_t-\frac{L\mu_2\omega_t^2}{2}\right) \mathbb{E}[\|g_{h,\lambda}(\theta^{(t)},\xi^{(t)})\|_2^2] &\le F_{h,\lambda}(\theta^{(0)}, \xi^{(0)}) \underbrace{- \mathbb{E}[F_{h,\lambda}(\theta^{(T+1)}, \xi^{(T+1)})]}_{\le0 \quad (\because~ \text{$F_{h,\lambda}\ge0$})} + \frac{L\mu_1}{2} \sum_{t=0}^T \omega_t^2\\
        &\le F_{h,\lambda}(\theta^{(0)}, \xi^{(0)}) + \frac{L\mu_1}{2} \sum_{t=0}^T \omega_t^2.
    \end{align*}
    The assertion is proved.
\end{proof}

\begin{proof}[Proof of Theorem \ref{thm:bound-SGSD} (i)]
    Lemma \ref{lemma:bound-SGSD} yields
    \begin{align*}
        \mathbb{E}\left[\mathsf{C}(\theta^{(\tau_{{}_T})},\xi^{(\tau_{{}_T})})^2\right]
        &= \sum_{s=0}^T\mathbb{E}\left[\mathsf{C}(\theta^{(\tau_{{}_T})},\xi^{(\tau_{{}_T})})^2~\middle|~\tau_{{}_T}=s\right]\mathbb{P}(\tau_{{}_T}=s) \nonumber \\
        &= \sum_{s=0}^T\mathbb{E}\left[\mathsf{C}(\theta^{(s)},\xi^{(s)})^2\right]\frac{2\omega_s-L\mu_2\omega_s^2}{\sum_{t=0}^{T}(2\omega_t-L\mu_2\omega_t^2)} \nonumber \\
        &= 
        \underbrace{
        \left\{ 
        \sum_{s=0}^T
        (2\omega_s-L\mu_2\omega_s^2)
        \mathbb{E}\left[\mathsf{C}(\theta^{(s)},\xi^{(s)})^2\right]
        \right\}
        }_{\le 2F_{h,\lambda}(\theta^{(0)},\xi^{(0)}) + L\mu_1\sum_{t=0}^{t}\omega_t^2
        \quad (\because~ \text{Lemma}~\ref{lemma:bound-SGSD})  
        }
        \frac{1}{\sum_{t=0}^{T}(2\omega_t-L\mu_2\omega_t^2)} \nonumber \\
        &\le \frac{2F_{h,\lambda}(\theta^{(0)}, \xi^{(0)}) + L\mu_1\sum_{t=0}^T \omega_t^2}{\sum_{t=0}^T (2\omega_t-L\mu_2\omega_t^2)}.
    \end{align*}
\end{proof}

\begin{proof}[Proof of Theorem \ref{thm:bound-SGSD} (ii)]
    Lemma \ref{lemma:bound-SGSD} together with the assumption $0<\omega_t<2/(L\mu_2)$ yields
    \begin{align*}
        \sum_{t=0}^T (2\omega_t-L\mu_2\omega_t^2) \left(\min_{t=0,...,T}\mathbb{E}\left[\mathsf{C}(\theta^{(t)},\xi^{(t)})^2\right]\right)
        &\le \sum_{t=0}^T (2\omega_t-L\mu_2\omega_t^2)\mathbb{E}\left[\mathsf{C}(\theta^{(t)},\xi^{(t)})^2\right] \nonumber \\
        &\le 2F_{h,\lambda}(\theta^{(0)}, \xi^{(0)}) + L\mu_1\sum_{t=0}^T \omega_t^2.
    \end{align*}
\end{proof}

\subsection{Proof of Corollary \ref{cor:convergence-rate-SGSD}}

\begin{proof}
    As a simple formula indicates
    \begin{align*}
        \sum_{t=0}^T(1+t)^{-1} &\le 1 + \int_0^T (1+t)^{-1} dt = 1 + \log(T+1), \text{ and}\\
        \sum_{t=0}^T(1+t)^{-1/2} &\ge \int_0^{T+1} (1+t)^{-1/2} dt = 2(T+2)^{1/2} - 2,
    \end{align*}
    Theorem \ref{thm:bound-SGSD} yields the assertion
    \begin{align*}
        \mathbb{E}\left[\mathsf{C}(\theta^{(\tau_{{}_T})},\xi^{(\tau_{{}_T})})^2\right] &\le \frac{2F_{h,\lambda}(\theta^{(0)},\xi^{(0)}) + L\mu_1\alpha^2\sum_{t=0}^T (1+t)^{-1}}{2\alpha\sum_{t=0}^T(1+t)^{-1/2} - L\mu_2\alpha^2\sum_{t=0}^T (1+t)^{-1}}\\
        &\le \frac{2F_{h,\lambda}(\theta^{(0)},\xi^{(0)}) + L\mu_1\alpha^2 + L\mu_1\alpha^2\log(T+1)}{4\alpha(T+2)^{1/2} - L\mu_2\alpha^2\log(T+1) - 4\alpha - L\mu_2\alpha^2}\\
        &= \mathcal{O}(T^{-1/2}\log{T}).
    \end{align*}
\end{proof}

\section{Additional Visualizations}
\label{supp:experiments}

This section provides additional visualizations for the synthetic dataset experiments described in Section~\ref{subsec:synthetic_dataset_experiments}. 
The four true functions utilized in this study are as follows:

\begin{itemize}
    \item (checkered) $f_1(x) = \sin(2x_1) \cos(2x_2)$
    \item (volcano) $f_2(x) = \scalebox{0.9}{$\exp\left( -\left( \{x_1 - \pi\}^2 + \{x_2 - \pi\}^2 - 1 \right)^2 \right)$}$
    \item (stripe) $f_3(x) = \sin\left( 2(x_1 + x_2) \right)$
    \item (plane) $f_4(x)=x_1-x_2$. 
\end{itemize}

These functions, visualized in Figure~\ref{fig:true_functions}, serve as the ground truth. The corresponding regression models estimated from outlier-contaminated datasets derived from these functions are presented in Figures~\ref{fig:supp_checkered}--\ref{fig:supp_plane}.

\begin{figure*}[!ht]
\centering
\begin{minipage}{0.23\textwidth}
\centering
\includegraphics[width=0.9\textwidth]{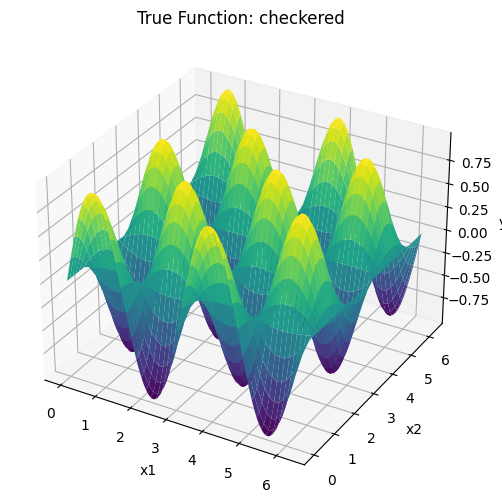}
\subcaption{$f_1$: checkered}
\label{subfig:f1}
\end{minipage}
\begin{minipage}{0.23\textwidth}
\centering
\includegraphics[width=0.9\textwidth]{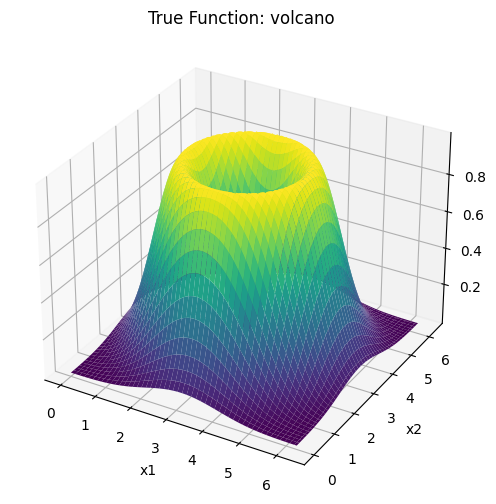}
\subcaption{$f_2$: volcano}
\label{subfig:f2}
\end{minipage} 
\begin{minipage}{0.23\textwidth}
\centering
\includegraphics[width=0.9\textwidth]{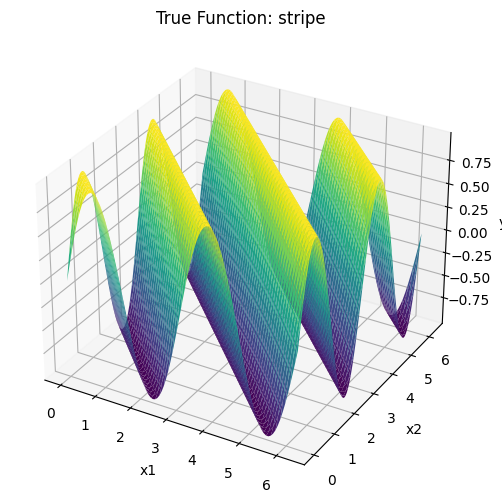}
\subcaption{$f_3$: stripe}
\label{subfig:f3}
\end{minipage}
\begin{minipage}{0.23\textwidth}
\centering
\includegraphics[width=0.9\textwidth]{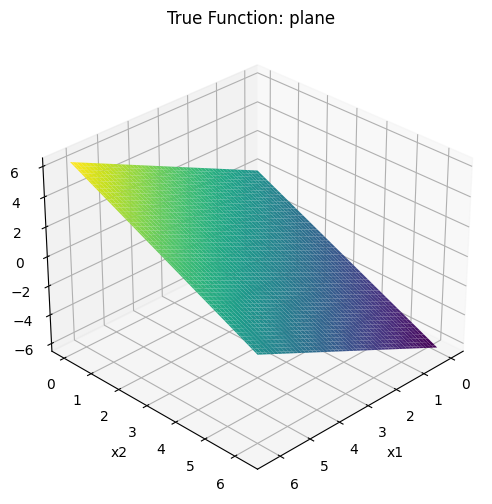}
\subcaption{$f_4$: plane}
\label{subfig:f4}
\end{minipage}
\caption{Underlying true functions in our synthetic dataset experiments.}
\label{fig:true_functions}
\end{figure*}

\begin{figure}[!p]
\centering
\fbox{
\begin{minipage}{0.32\textwidth}
\centering
\includegraphics[width=\textwidth]{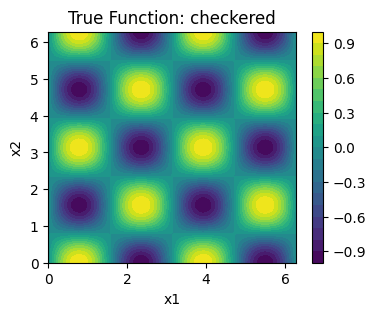}
\subcaption*{Underlying true function}
\end{minipage}
} \\
\begin{minipage}{0.32\textwidth}
\centering
\includegraphics[width=\textwidth]{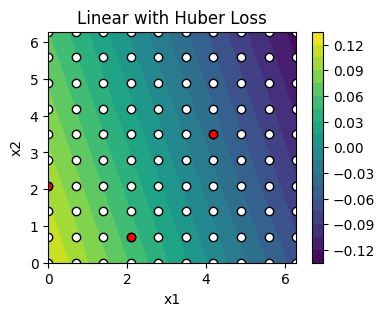}
\end{minipage}
\begin{minipage}{0.32\textwidth}
\centering
\includegraphics[width=\textwidth]{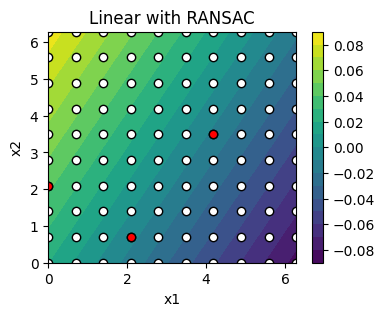}
\end{minipage}
\begin{minipage}{0.32\textwidth}
\centering
\includegraphics[width=\textwidth]{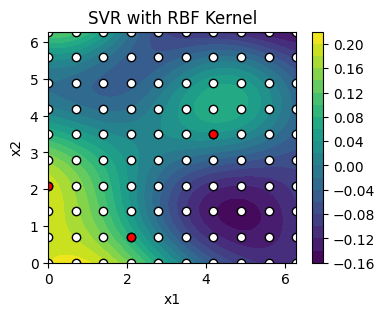}
\end{minipage}
\begin{minipage}{0.32\textwidth}
\centering
\includegraphics[width=\textwidth]{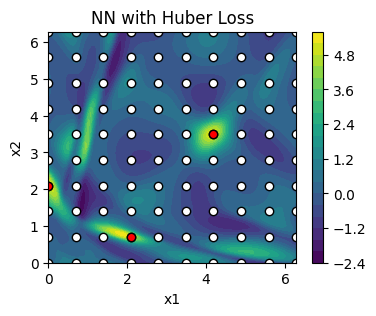}
\end{minipage}
\begin{minipage}{0.32\textwidth}
\centering
\includegraphics[width=\textwidth]{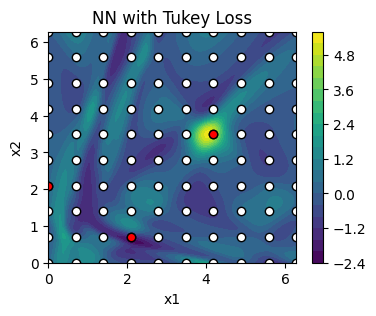}
\end{minipage}
\begin{minipage}{0.32\textwidth}
\centering
\includegraphics[width=\textwidth]{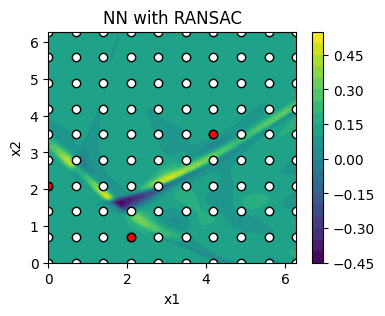}
\end{minipage}
\begin{minipage}{0.32\textwidth}
\centering
\includegraphics[width=\textwidth]{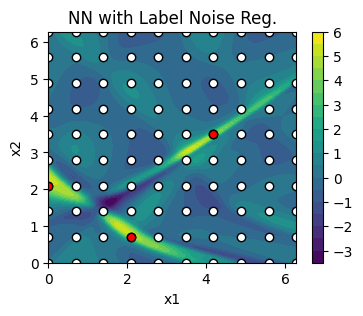}
\end{minipage}
\begin{minipage}{0.32\textwidth}
\centering
\includegraphics[width=\textwidth]{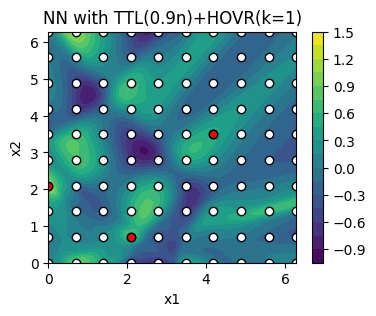}
\end{minipage}
\begin{minipage}{0.32\textwidth}
\centering
\includegraphics[width=\textwidth]{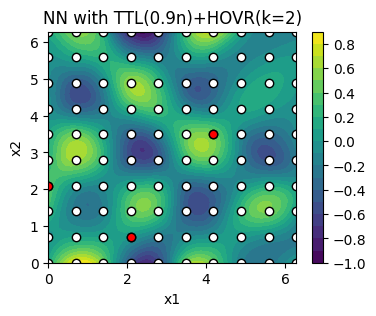}
\end{minipage}
\caption{Checkered function $f_1(x)=\sin(2x_1)\cos(2x_2)$.}
\label{fig:supp_checkered}
\end{figure}

\begin{figure}[!p]
\centering
\fbox{
\begin{minipage}{0.32\textwidth}
\centering
\includegraphics[width=\textwidth]{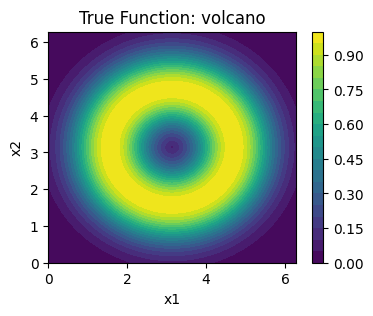}
\subcaption*{Underlying true function}
\end{minipage}
} \\
\begin{minipage}{0.32\textwidth}
\centering
\includegraphics[width=\textwidth]{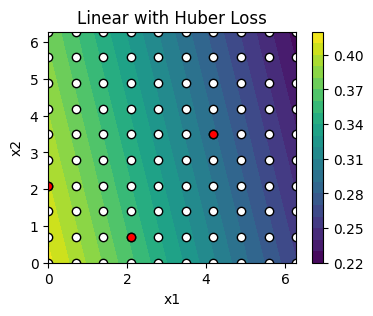}
\end{minipage}
\begin{minipage}{0.32\textwidth}
\centering
\includegraphics[width=\textwidth]{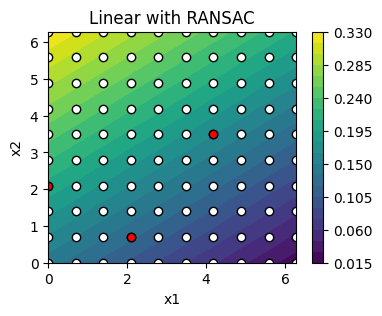}
\end{minipage}
\begin{minipage}{0.32\textwidth}
\centering
\includegraphics[width=\textwidth]{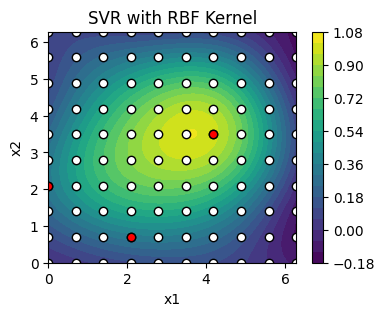}
\end{minipage}
\begin{minipage}{0.32\textwidth}
\centering
\includegraphics[width=\textwidth]{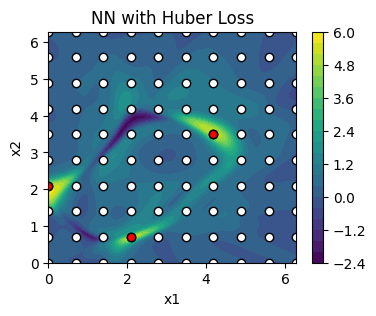}
\end{minipage}
\begin{minipage}{0.32\textwidth}
\centering
\includegraphics[width=\textwidth]{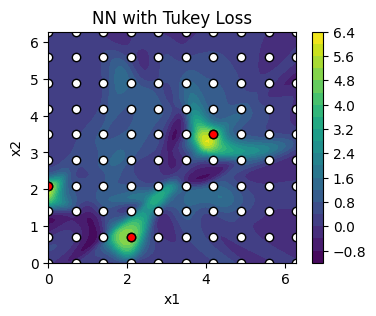}
\end{minipage}
\begin{minipage}{0.32\textwidth}
\centering
\includegraphics[width=\textwidth]{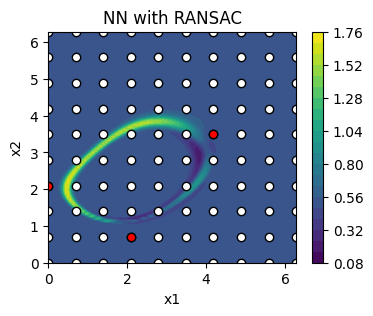}
\end{minipage}
\begin{minipage}{0.32\textwidth}
\centering
\includegraphics[width=\textwidth]{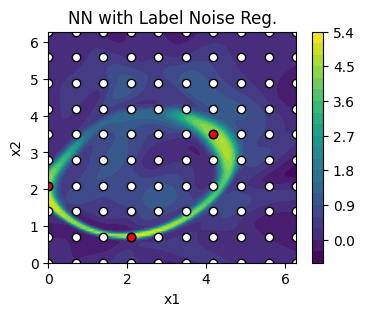}
\end{minipage}
\begin{minipage}{0.32\textwidth}
\centering
\includegraphics[width=\textwidth]{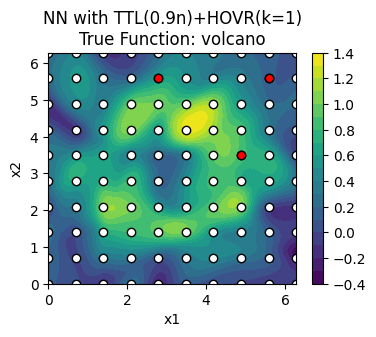}
\end{minipage}
\begin{minipage}{0.32\textwidth}
\centering
\includegraphics[width=\textwidth]{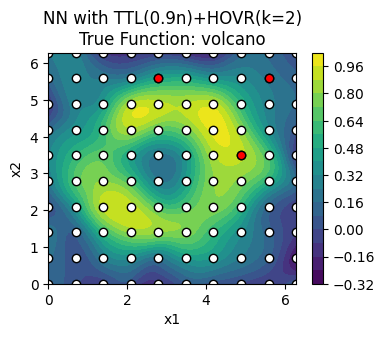}
\end{minipage}
\caption{Volcano function $f_2(x)=\exp\left( -\left( \{x_1 - \pi\}^2 + \{x_2 - \pi\}^2 - 1 \right)^2 \right)$.}
\label{fig:supp_volcano}
\end{figure}

\begin{figure}[!p]
\centering
\fbox{
\begin{minipage}{0.32\textwidth}
\centering
\includegraphics[width=\textwidth]{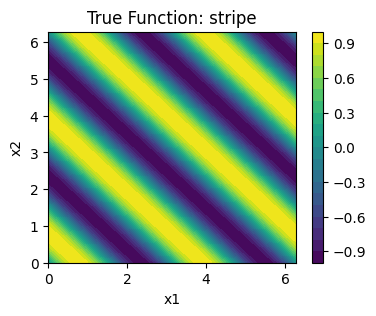}
\subcaption*{Underlying true function}
\end{minipage}
} \\
\begin{minipage}{0.32\textwidth}
\centering
\includegraphics[width=\textwidth]{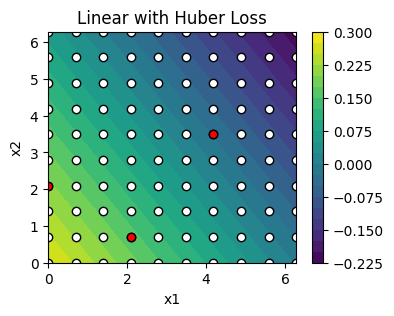}
\end{minipage}
\begin{minipage}{0.32\textwidth}
\centering
\includegraphics[width=\textwidth]{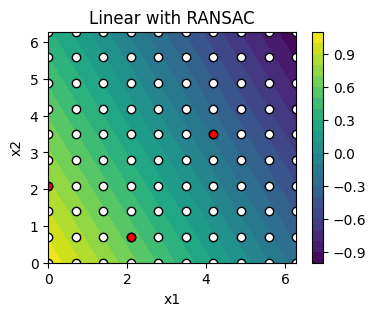}
\end{minipage}
\begin{minipage}{0.32\textwidth}
\centering
\includegraphics[width=\textwidth]{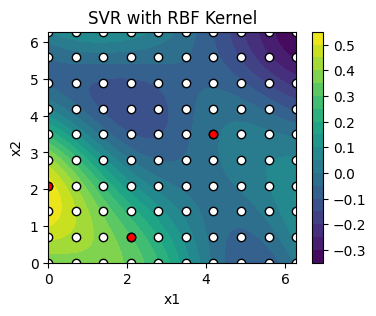}
\end{minipage}
\begin{minipage}{0.32\textwidth}
\centering
\includegraphics[width=\textwidth]{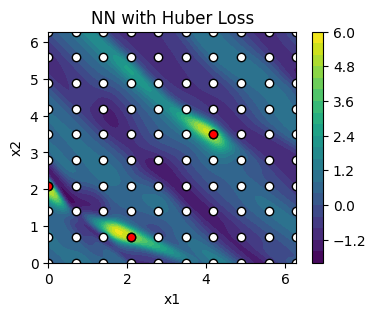}
\end{minipage}
\begin{minipage}{0.32\textwidth}
\centering
\includegraphics[width=\textwidth]{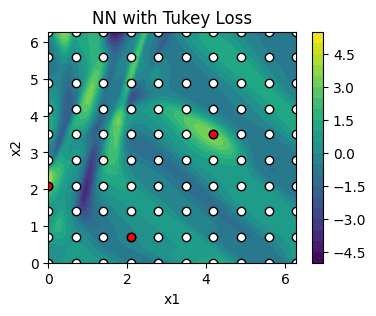}
\end{minipage}
\begin{minipage}{0.32\textwidth}
\centering
\includegraphics[width=\textwidth]{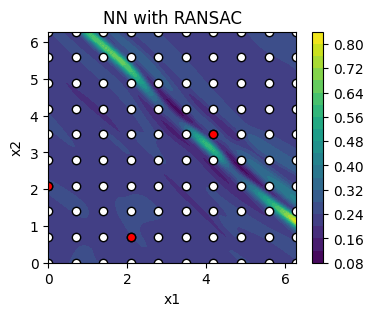}
\end{minipage}
\begin{minipage}{0.32\textwidth}
\centering
\includegraphics[width=\textwidth]{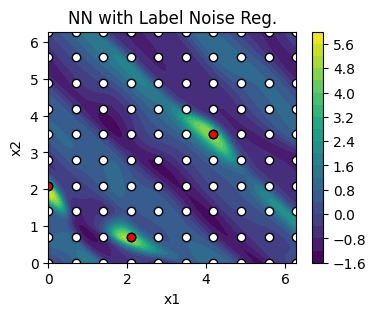}
\end{minipage}
\begin{minipage}{0.32\textwidth}
\centering
\includegraphics[width=\textwidth]{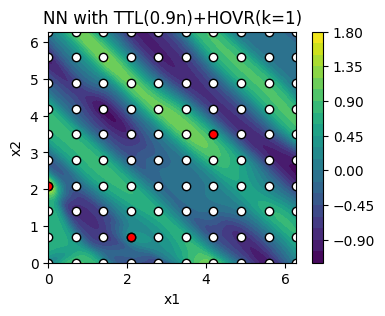}
\end{minipage}
\begin{minipage}{0.32\textwidth}
\centering
\includegraphics[width=\textwidth]{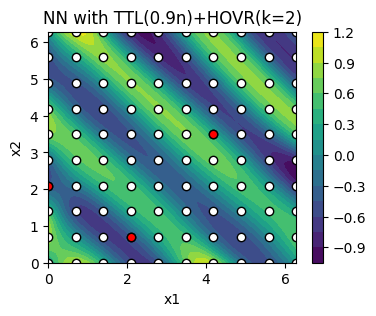}
\end{minipage}
\caption{Stripe function $f_3(x)=\sin(2(x_1+x_2))$.}
\label{fig:supp_stripe}
\end{figure}

\begin{figure}[!p]
\centering
\fbox{
\begin{minipage}{0.32\textwidth}
\centering
\includegraphics[width=\textwidth]{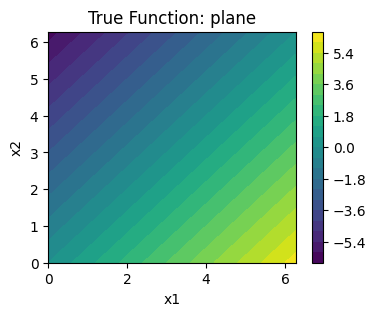}
\subcaption*{Underlying true function}
\end{minipage}
} \\
\begin{minipage}{0.32\textwidth}
\centering
\includegraphics[width=\textwidth]{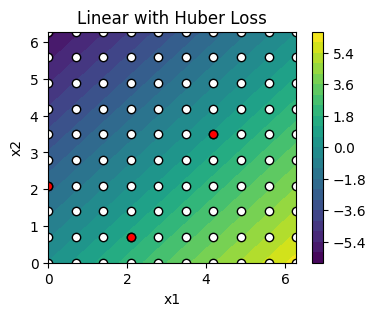}
\end{minipage}
\begin{minipage}{0.32\textwidth}
\centering
\includegraphics[width=\textwidth]{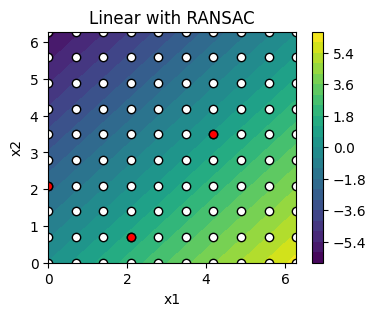}
\end{minipage}
\begin{minipage}{0.32\textwidth}
\centering
\includegraphics[width=\textwidth]{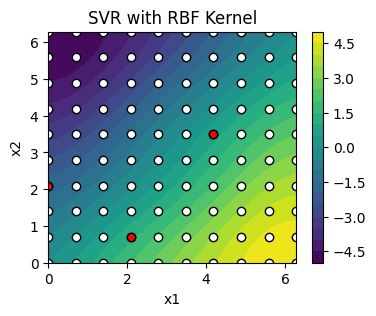}
\end{minipage}
\begin{minipage}{0.32\textwidth}
\centering
\includegraphics[width=\textwidth]{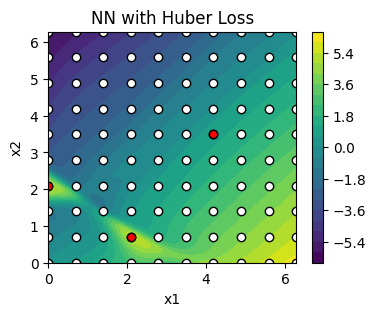}
\end{minipage}
\begin{minipage}{0.32\textwidth}
\centering
\includegraphics[width=\textwidth]{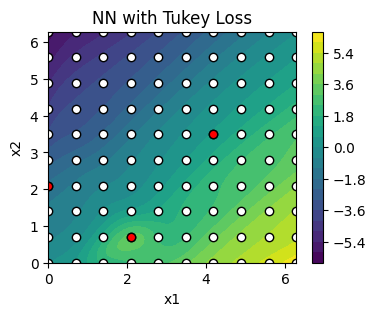}
\end{minipage}
\begin{minipage}{0.32\textwidth}
\centering
\includegraphics[width=\textwidth]{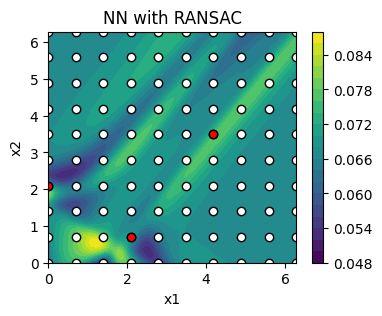}
\end{minipage}
\begin{minipage}{0.32\textwidth}
\centering
\includegraphics[width=\textwidth]{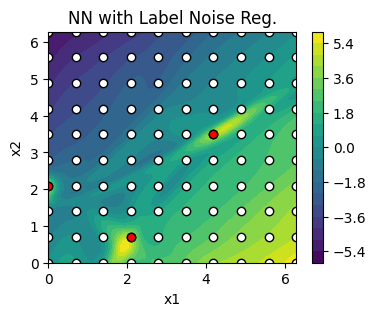}
\end{minipage}
\begin{minipage}{0.32\textwidth}
\centering
\includegraphics[width=\textwidth]{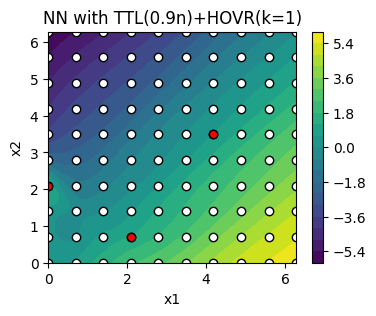}
\end{minipage}
\begin{minipage}{0.32\textwidth}
\centering
\includegraphics[width=\textwidth]{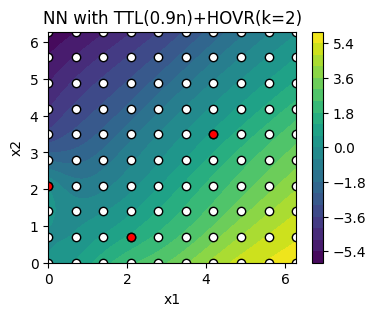}
\end{minipage}
\caption{Plane function $f_4(x)=x_1-x_2$.}
\label{fig:supp_plane}
\end{figure}

\end{document}